\definecolor{gray}{rgb}{0.93,0.93,0.93}
\definecolor{light-gold}{rgb}{0.99,0.97,0.78}
\def\be{\begin{equation}}
\def\ee{\end{equation}}
\def\bm{\begin{multline}}
\def\bfig{\begin{figure}[htb]}
\def\efig{\end{figure}}
\newcommand{\dd}{{\rm d}}
\newcommand{\e}[1]{\,{\rm e}^{#1}\,}
\newcommand{\sumtwo}[2]{\sum_{\substack{#1 \\ #2}}}
\newcommand{\sumthree}[3]{\sum_{\substack{#1 \\ #2 \\ #3}}}
\newcommand{\isdefby}{\;\doteqdot\;}
\numberwithin{equation}{section}
\newtheorem{theorem}{Theorem}[section]
\newtheorem{proposition}[theorem]{Proposition}
\newcommand{\eps}{{\varepsilon}}
\newcommand{\caE}{{\mathcal E}}
\newcommand{\caG}{{\mathcal G}}
\newcommand{\caM}{{\mathcal M}}
\newcommand{\caP}{{\mathcal P}}
\newcommand{\caV}{{\mathcal V}}
\newcommand{\caW}{{\mathcal W}}
\newcommand{\caX}{{\mathcal X}}
\newcommand{\bbE}{{\mathbb E}}
\newcommand{\bbN}{{\mathbb N}}
\newcommand{\bbP}{{\mathbb P}}
\newcommand{\bbR}{{\mathbb R}}
\newcommand{\bbS}{{\mathbb S}}
\newcommand{\bbZ}{{\mathbb Z}}
\newcommand{\bsone}{{\boldsymbol 1}}
\newcommand{\bsJ}{{\boldsymbol J}}
\newcommand{\bsm}{{\boldsymbol m}}
\newcommand{\bsq}{{\boldsymbol q}}
\newcommand{\bsx}{{\boldsymbol x}}
\newcommand{\bsw}{{\boldsymbol w}}
\newcommand{\bspi}{{\boldsymbol\pi}}
\newcommand{\bsphi}{{\boldsymbol\phi}}
\newcommand{\bsvarphi}{{\boldsymbol\varphi}}
  \def\tagform@#1{\maketag@@@{\scriptsize{(#1)}\@@italiccorr}}
\renewcommand{\eqref}[1]{(\ref{#1})}
\begin{document}

%{\hfill\small \version} \vspace{2mm}

\title{Loop correlations in random wire models}

\author{Costanza Benassi}
\address{Department of Mathematics, Physics \& Electrical Engineering, Northumbria University,
Newcastle upon Tyne, NE1 8ST, United Kingdom}
\email{costanza.benassi@northumbria.ac.uk}

\author{Daniel Ueltschi}
\address{Department of Mathematics, University of Warwick,
Coventry, CV4 7AL, United Kingdom}
\email{daniel@ueltschi.org}

\subjclass{60C05, 60K35, 82B05, 82B20, 82B26}

\keywords{Random wires, random currents, loop soup, BFS representation, Poisson-Dirichlet distribution}

\begin{abstract}
We introduce a family of loop soup models on the hypercubic lattice. The models involve links on the edges, and random pairings of the link endpoints on the sites. We conjecture that loop correlations of distant points are given by Poisson-Dirichlet correlations in dimensions three and higher. We prove that, in a specific random wire model that is related to the classical XY spin system, the probability that distant sites form an even partition is given by the Poisson-Dirichlet counterpart.
\end{abstract}

\thanks{\copyright{} 2018 by the authors. This paper may be reproduced, in its
entirety, for non-commercial purposes.}

\maketitle

{\small\tableofcontents}

\section{Introduction}
\label{sec intro}

Loop soups are models in statistical mechanics that involve sets of one-dimensional loops living in higher dimensional space. These models are representations of particle or spin systems of statistical physics. It was recently conjectured that in most cases, in dimensions three and higher, these models have phases with long, macroscopic loops --- the lengths of these loops scale like the volume of the system --- and the joint distribution of macroscopic loops is always Poisson-Dirichlet  \cite{GUW}.

This conjecture has been rigorously established in a model of spatial permutations related to the quantum Bose gas \cite{BU,BZ,EP}. This model has a peculiar structure that makes it possible to integrate out the spatial variables and to use tools from asymptotic analysis, so there were suspicions that this property was accidental. But the conjecture has also been verified numerically in several other models, namely in lattice permutations \cite{GLU}; in loop $O(N)$ models \cite{NCSOS}; and in the random interchange and a closely related loop model \cite{BBBU}. These findings are alas not supported by rigorous results.

There exist limited results for some models with fundamental spatial structure. The method of reflection positivity and infrared bounds \cite{FSS,DLS} allows to prove the occurrence of macroscopic loops \cite{Uel1}. More precisely, it is shown that the expectation of the length of a loop attached to a given site, when divided by the volume, is bounded away from 0 uniformly in the size of the system. While encouraging, this result gives no information regarding the possible presence of several macroscopic loops, let alone their joint distribution.

The goal of this article is to propose a genuinely spatial loop model where much of the conjecture can be rigorously established. We refer to it as the ``random wire model". It is defined for arbitrary finite graphs; in the most relevant case, the set of vertices is a large box in $\bbZ^d$ and the set of edges are the pairs of nearest-neighbours. On each edge there is a random number of ``links" satisfying the constraint that the number of links touching a site is even. These links are paired at each site, resulting in closed trajectories (an illustration can be found in Fig.\ \ref{fig loops}). Our main result is a rigorous proof that {\it even loop correlations} are given by Poisson-Dirichlet, at least when the parameters of the model are chosen wisely.

There is a lot of background for this study. Our random wire model is an extension of the random current representation of the Ising model that was introduced by Griffiths, Hurst, and Sherman \cite{GHS}, and popularised by Aizenman \cite{Aiz}. It is also related to the Brydges-Fr\"ohlich-Spencer representation of spin $O(N)$ models \cite{BFS,FFS}, and to loop $O(N)$ models \cite{PS,BST}. The Poisson-Dirichlet distribution of random partitions was introduced by Kingman \cite{Kin}; it is the invariant measure for the split-merge (coagulation-fragmentation) process \cite{Tsi,DMZZ,Ber}. Its relevance for mean-field loop soup models was first suggested by Aldous for the random interchange model on the complete graph, see \cite{BD}; Schramm succeeded in making this rigorous \cite{Sch} (see also \cite{BK,Bjo1,Bjo2,BKLM}). The relevance of these ideas for systems with spatial structure was pointed out in \cite{GUW}. 

The connections between the Poisson-Dirichlet distribution and symmetry breaking was noticed and exploited in \cite{Uel1,NCSOS}. Our method of proof combines these ideas and rests on two major results about the classical XY model: The proof of Fr\"ohlich, Simon, and Spencer that a phase transition occurs in dimensions three and higher \cite{FSS}; and Pfister's characterisation of all translation-invariant extremal infinite-volume Gibbs states \cite{Pfi2}. We should point out that the precise relations between Poisson-Dirichlet and symmetry breaking are far from elucidated. The heuristics of Section \ref{sec heuristics} show that the loops that represent the classical XY model are characterised by the distribution PD(1), as are the loops of the quantum XY model \cite{Uel1}. However, these heuristics also show that the loops representing the classical Heisenberg model are characterised by PD($\frac32$) while the loops of the quantum Heisenberg model are PD(2) \cite{GUW,Uel1}. Right now, this looks curious.

It is perhaps worth emphasising that the Poisson-Dirichlet distribution is expected to characterise loop soups only in dimensions three and higher. The behaviour in dimension two is also interesting and partially understood, see \cite{Betz,BFU,EP}. There may be a Berezinskii--Kosterlitz--Thouless phase where loop correlations have power-law decay instead of exponential. A separate topic is the critical behaviour of two-dimensional loop soups, that is characterised by conformal invariance and Schramm-L\"owner evolution; there have been many impressive results in recent years, but we do not discuss this here.

The article is organised as follows. The notation is summarised in Section \ref{sec not} for the comfort of the reader. The random wire model is introduced in Section \ref{sec setting} and basic properties are established. The Poisson-Dirichlet conjecture is explained in Section \ref{sec PD}. Our main results, Theorems \ref{thm long dens} and \ref{thm PD}, are stated in Section \ref{sec results}. The first claim deals with the density of points in long loops, and the second claim is about even loop correlations being given by Poisson-Dirichlet. Section \ref{sec wire O(N)} discusses classical spin systems and their relations to the random wire model. We gather the necessary properties in Section \ref{sec proofs} by summarising and completing the results of \cite{FSS,Pfi2}, and we prove Theorems \ref{thm long dens} and \ref{thm PD}.

\section{Notation}
\label{sec not}
We list here the main notation used in this article; the precise definitions can be found in subsequent sections.

\begin{itemize}
\item $\caG = (\caV,\caE)$ the graph; $\caV$ is the set of vertices and $\caE$ is the set edges. $\caG^{\rm b} = (\caV \cup \bar\caV, \caE \cup \bar\caE)$ denotes the graph with a boundary; $\bar\caV$ is the set of boundary sites and $\bar\caE$ are edges between $\caV$ and $\bar\caV$.

\item $\caG_L = (\Lambda_L,\caE_L)$ with $\Lambda_L = \{-L,\dots,L\}^d \subset \bbZ^d$ and $\caE_L$ the set of nearest-neighbours. $\caG_L^{\rm b}$ is the graph with boundary $\partial\Lambda_L$, given by sites of $\bbZ^d$ at distance 1 from $\Lambda_L$.

\item $\caW_\caG = \{ \bsw = (\bsm,\bspi) \}$ is the set of wire configurations on $\caG$, that consists of a link configuration $\bsm \in \caM_\caG \subset \bbN_0^\caE$ (with an even number of links touching each site) and a pairing configuration $\bspi \in \caP_\caG(\bsm)$.

\item $n_x(\bsm)$ is the local occupancy (or ``local time"); it is equal to the number of times that loops pass by the site $x \in \caV$.

\item $\lambda(\bsw)$ is the number of loops in the wire configuration $\bsw$.

%\item $\Gamma_\caG$ is the set of loops on $\caG$ and $\caR_\caG = \bbN_0^{\Gamma_\caG}$ is the set of loop configurations on $\caG$.

%\item $L : \caC_\caG \to \caR_\caG$ is the map that returns the loops of a current configuration. $|L(\bsm,\bspi)|$ is the number of loops.

\item $\alpha$ is the positive ``loop parameter".

\item $\bsJ = (J_e)_{e \in \caE}$ are ``edge constants", or ``coupling parameters".

\item $U : \bbN_0 \to \bbR$ is a potential function; $U(n_x)$ gives the energy of the $n_x$ wires that cross the site $x\in\caV$.

\item $\bbP_\caG^{\alpha,\bsJ}, \bbE_\caG^{\alpha,\bsJ}$ denote the probability and expectation with respect to wire configurations.

\item $Z_\caG(\alpha,\bsJ)$ is the partition function and $p_\caG(\alpha,\bsJ)$ is the pressure.

\item $\tilde n_x$ is the number of pairs at the site $x$ that belong to long or open loops.

\item $E_X(\bsx,\bsq)$ is the set of configurations $\bsw$ where $(x_i,q_i)$ and $(x_j,q_j)$ belong to the same loop iff $i,j$ belong to the same partition element of $X$.

\item $\caX_{2k}^{\rm even}$ is the family of set partitions of $\{1,\dots,2k\}$ whose elements have even cardinality.

\item $M_\theta(X)$ is the probability that $k$ random points on $[0,1]$ and a random partition chosen with Poisson-Dirichlet distribution PD($\theta$), yield the set partition $X$.

\item $M_\theta^{\rm even}(2k) = \sum_{X \in \caX_{2k}^{\rm even}} M_\theta(X)$ is the probability that $2k$ random points on $[0,1]$, and a random partition from PD($\theta$), yield an even set partition.
\end{itemize}

\section{Setting}
\label{sec setting}

\subsection{Links, pairings, wires, and loops}
\label{sec links}

We consider a generalisation of the model of random currents of the Ising model. Let $\caG = (\caV,\caE)$ a graph. Given a collection $\bsm = (m_e)_{e\in\caE}$ of nonnegative integers, we define the local occupancy (or local time) to be
\be
\label{def local occupancy}
n_x(\bsm) \isdefby \tfrac12 \sum_{e \in \caE, e \ni x} m_e.
\ee
A {\bf link configuration} is a collection $\bsm$ that satisfies the constraint that there is an {\it even} number of links touching any site; in other words, the local occupancy $n_x(\bsm)$ is integer at every site $x \in \caV$. We let $\caM_\caG$ denote the set of link configurations. A link configuration can be represented by a labeled multigraph with labeled edges, see Fig. \ref{fig multigraph}.

\bfig
\includegraphics[width=55mm]{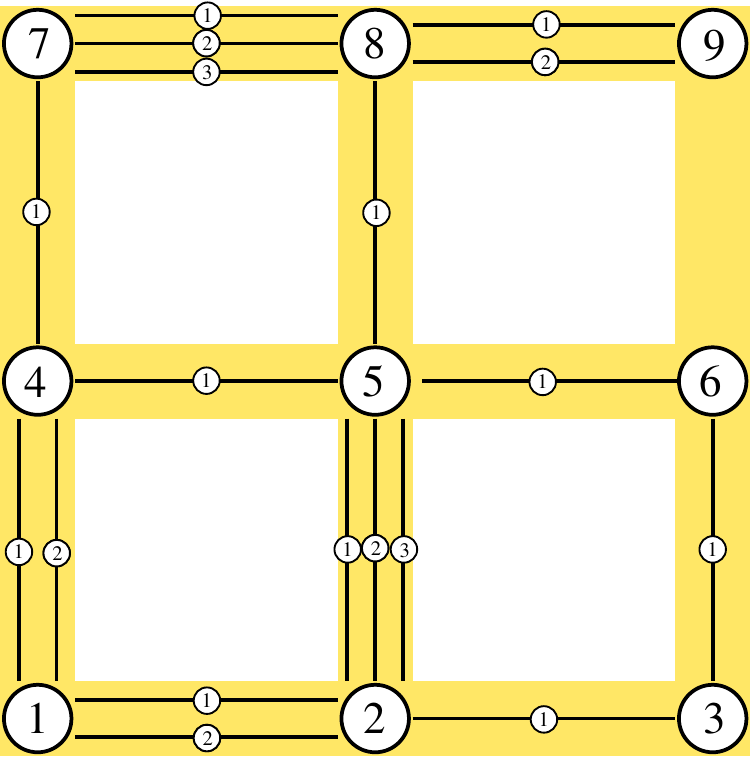}
\caption{A link configuration is represented by a labeled multigraph. Here the graph is the square lattice $\{1,2,3\}^2$ and edges are nearest-neighbours.}
\label{fig multigraph}
\efig

For a given link configuration $\bsm$, a {\bf pairing configuration} $\bspi = (\pi_x)_{x \in \caV}$ is a collection of pairings such that $\pi_x$ connects the links that touch the site $x \in \caV$. This is illustrated in Fig.\ \ref{fig loops}. We let $\caP_\caG(\bsm)$ denote the set of pairing configurations that are compatible with $\bsm$; notice that the number of pairing configurations is equal to
\be
\label{number pairings}
|\caP_\caG(\bsm)| = \prod_{x \in \caV} \bigl( 2n_x(\bsm)-1 \bigr) !!
\ee
(with the convention that $(-1)!! = 1$).
We call the pair $\bsw = (\bsm,\bspi)$ a {\bf wire configuration}; the set of wire configurations on the graph $\caG$ is denoted $\caW_\caG$.

We now define the loops of a wire configuration. This notion is intuitive and it is illustrated in Fig.\ 2 (b), even though the proper definition is a bit cumbersome. We consider the set of finite sequences of labeled links $\bigl( (e_1,p_1), \dots, (e_\ell,p_\ell) \bigr)$ where $e_i \in \caE$ and $p_i \in \{1,\dots,m_{e_i}\}$, and such that $e_i \cap e_{i+1} \neq \emptyset$, $i = 1,\dots,\ell$. We identify sequences that are related by cyclicity and inversion; that is, we identify $\bigl( (e_2,p_2),\dots,(e_\ell,p_\ell), (e_1,p_1) \bigr)$ and $\bigl( (e_\ell,p_\ell),\dots,(e_1,p_1) \bigr)$ with $\bigl( (e_1,p_1),\dots,(e_\ell,p_\ell) \bigr)$. After identification, these sequences form a {\bf loop} of length $\ell$. In order to define the set of loops of a given wire configuration $\bsw$, we can start at any link $(e_1,p_1)$; we choose an endpoint $x$ and get the next link as the one that is paired by the pairing $\pi_x$; we continue until we get back to $(e_1,p_1)$. For the next loop we choose a link that has not been selected yet, and we proceed alike until all links have been exhausted.

The number of loops of a wire configuration $\bsw$ is denoted $\lambda(\bsw)$. We also define the length of a loop as the number of links in the loop.

\bfig
\includegraphics[width=56mm]{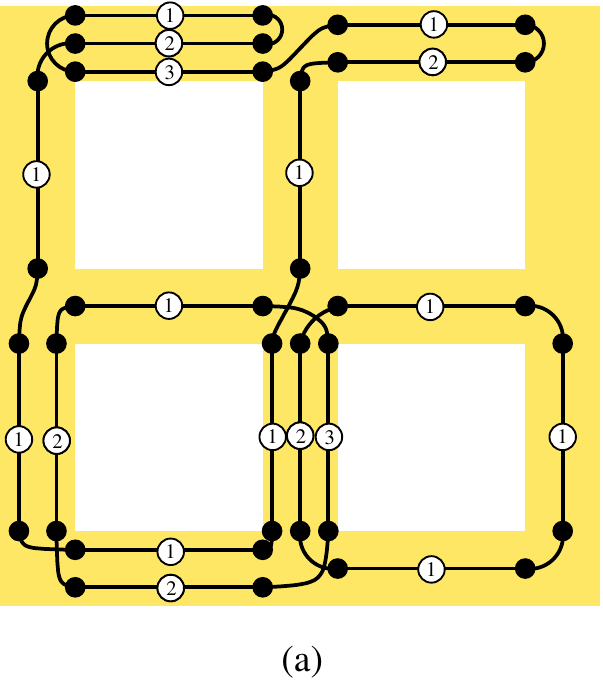} \qquad\qquad
\includegraphics[width=56mm]{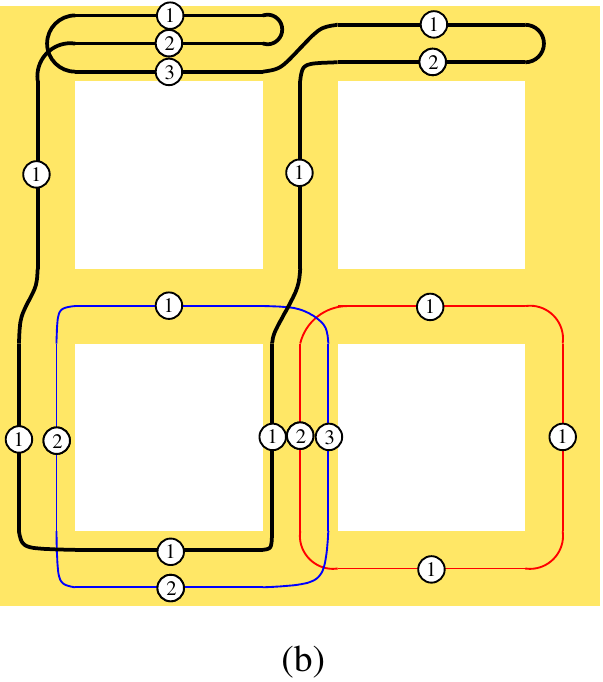}
\caption{(a) A wire configuration consists of links and pairings. (b) It gives rise to a loop configuration, here with three loops.}
\label{fig loops}
\efig

\subsection{The model of random wires}

We now introduce the probability distribution on wire configurations.
Let $\bsJ = (J_e)_{e \in \caE}$ a collection of nonnegative parameters indexed by the edges of $\caG$. Let $\alpha>0$ another parameter. We consider an ``interaction potential" function $U : \bbN_0 \to \bbR \cup \{+\infty\}$ and define the probability of the wire configuration $\bsw = (\bsm,\bspi)$ to be
\be
\label{def proba}
\bbP_\caG^{\alpha,\bsJ}(\bsw) = \frac1{Z_\caG(\alpha,\bsJ)} \alpha^{\lambda(\bsw)} \biggl( \prod_{e \in \caE} \frac{J_e^{m_e}}{m_e!} \biggr) \exp\biggl\{ -\sum_{x\in\caV} U \bigl( n_x(\bsm) \bigr) \biggr\}.
\ee
Here, the normalisation $Z_\caG(\alpha,\bsJ)$ is the partition function defined by
\be
Z_\caG(\alpha,\bsJ) = \sum_{\bsw \in \caW_\caG} \alpha^{\lambda(\bsw)} \biggl( \prod_{e \in \caE} \frac{J_e^{m_e}}{m_e!} \biggr) \exp\biggl\{ -\sum_{x\in\caV} U \bigl( n_x(\bsm) \bigr) \biggr\}.
\ee

Notice that the exponent of $\alpha$ is $\lambda(\bsw)$, which is the number of loops. For $\alpha\neq1$, loops affect the probability distribution.

The interaction potential typically becomes infinite as the local occupancy diverges. It is natural to consider models where the partition function is finite for all choices of $\alpha$ and $\bsJ$. The first claim of the next proposition gives a sufficient condition.

\begin{proposition}
\label{prop basic prop}
Let $\bar\alpha = \max(\sqrt\alpha,1)$ and assume that the potential function satisfies
\[
(2n-1)!! \e{-U(n)} \leq C^n
\]
for some positive constant $C$ independent of $n$. Then
\begin{itemize}
\item[(a)]
The partition function is bounded by
\[
Z_\caG(\alpha, \bsJ) \leq \exp\biggl\{ \bar\alpha C \sum_{e\in\caE} J_e \biggr\}.
\]
\item[(b)] Let $\ell^{\rm max}_{x_0}(\bsw)$ be the length of the longest loop that passes through the site $x_0$. For all $n\in\bbN$ and all $\eta\geq0$, we have
\[
\bbP_\caG^{\alpha,\bsJ}(\ell^{\max}_{x_0} \geq n) \leq \e{-\eta n} \sum_{k\geq0} \sumtwo{x_1,\dots,x_k \in \caV}{\{x_{i-1},x_i\} \in \caE \text{ for } i = 1,\dots,k-1} \prod_{i=1}^k \bigl( \e{\e\eta \bar\alpha C J_{\{x_{i-1},x_i\}}} - 1 \bigr)^{1/2}.
\]
\end{itemize}
\end{proposition}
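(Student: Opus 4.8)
\smallskip
\noindent\emph{Part (a).} I would sum over pairing configurations first and then over link configurations. For a fixed $\bsm\in\caM_\caG$ the factor $\alpha^{\lambda(\bsw)}$ still depends on $\bspi$, but every loop uses at least two links, so $2\lambda(\bsw)\le\sum_{e\in\caE}m_e$ and hence $\alpha^{\lambda(\bsw)}\le\bar\alpha^{\sum_e m_e}$ for every compatible pairing. Combining this with $\eqref{number pairings}$, the hypothesis $(2n-1)!!\,\e{-U(n)}\le C^n$, and the identity $\sum_{x\in\caV}n_x(\bsm)=\sum_{e\in\caE}m_e$ (each edge has two endpoints in $\caV$), one gets
\[
Z_\caG(\alpha,\bsJ)\le\sum_{\bsm\in\caM_\caG}\Bigl(\prod_{x\in\caV}(2n_x(\bsm)-1)!!\,\e{-U(n_x(\bsm))}\Bigr)\bar\alpha^{\sum_e m_e}\prod_{e\in\caE}\frac{J_e^{m_e}}{m_e!}\le\sum_{\bsm\in\caM_\caG}\prod_{e\in\caE}\frac{(\bar\alpha C J_e)^{m_e}}{m_e!}.
\]
Dropping the evenness constraint (all summands are nonnegative) replaces $\caM_\caG$ by $\bbN_0^\caE$ and the sum factorises over edges into $\prod_{e\in\caE}\e{\bar\alpha CJ_e}$, which is the claim.

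\smallskip
\noindent\emph{Part (b), Step 1 (reduction to a walk sum).} The event $\{\ell^{\max}_{x_0}\ge n\}$ forces some loop through $x_0$ to have length at least $n$; choosing an occurrence of $x_0$ on such a loop and an orientation produces a closed nearest-neighbour walk $W=(x_0,x_1,\dots,x_k=x_0)$ with $\{x_{i-1},x_i\}\in\caE$ and $k\ge n$. A union bound over such walks (each loop counted at least once) together with $\bbone[k\ge n]\le\e{\eta(k-n)}$ gives
\[
\bbP_\caG^{\alpha,\bsJ}(\ell^{\max}_{x_0}\ge n)\le\e{-\eta n}\sum_{k\ge0}\;\sumtwo{x_1,\dots,x_k\in\caV}{\{x_{i-1},x_i\}\in\caE}\;\e{\eta k}\,\bbP_\caG^{\alpha,\bsJ}\bigl(W\text{ is realised as a loop}\bigr),
\]
where $W=(x_0,\dots,x_k)$. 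Below I write $t_e(W)$ for the number of steps of $W$ along $e$ and $r_x(W)$ for the number of visits of $W$ to $x$, so that $\sum_e t_e(W)=\sum_x r_x(W)=k$; similarly for sub-walks of $W$.

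\smallskip
\noindent\emph{Step 2 (Cauchy--Schwarz and loop removal).} Cut the closed walk $W$ at $x_0$ and at its midpoint into two arcs $A_1,A_2$, so that $t_e(A_1)+t_e(A_2)=t_e(W)$ and $\ell(A_1)+\ell(A_2)=k$. If $W$ is realised as a loop then the link configuration dominates $t_e(A_j)$ on every edge of $A_j$, for $j=1,2$; calling $B_j$ this (increasing) event, $\bbP(W\text{ realised})\le\bbP(B_1\cap B_2)\le\bbP(B_1)^{1/2}\bbP(B_2)^{1/2}$. The heart of the proof is to bound $\e{\eta\,\ell(A_j)}\,\bbP_\caG^{\alpha,\bsJ}(B_j)\le\prod_{e\in\caE}\bigl(\e{\e\eta\,\bar\alpha CJ_e}-1\bigr)^{t_e(A_j)}$: this is obtained by a loop/path removal, comparing $\mathrm{wt}(\bsw)$ against the weight of the configuration with the required links removed so that $Z_\caG(\alpha,\bsJ)$ cancels, then bounding the change in the pairing entropy and the potential energy $\prod_x(2n_x-1)!!\,\e{-U(n_x)}$ by means of the hypothesis (used in the form $(2n-1)!!\,\e{-U(n+r)}\le C^{r}C^n$, from $(2n-1)!!\le(2n+2r-1)!!$), and summing the remaining link multiplicities on the edges of $A_j$ against $\sum_{m\ge t}x^m/m!\le(\e x-1)^t$, with the $\e\eta$ absorbed into these sums. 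Multiplying the two arc bounds and taking the square root yields
\[
\e{\eta k}\,\bbP_\caG^{\alpha,\bsJ}\bigl(W\text{ realised as a loop}\bigr)\le\prod_{e\in\caE}\bigl(\e{\e\eta\,\bar\alpha CJ_e}-1\bigr)^{t_e(W)/2}=\prod_{i=1}^{k}\bigl(\e{\e\eta\,\bar\alpha CJ_{\{x_{i-1},x_i\}}}-1\bigr)^{1/2}.
\]

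\smallskip
\noindent\emph{Step 3 (conclusion).} Substituting this into Step 1, and bounding the sum over closed walks through $x_0$ of length $k$ by the sum over all nearest-neighbour walks $(x_0,x_1,\dots,x_k)$ with $\{x_{i-1},x_i\}\in\caE$, gives exactly the stated inequality. The main obstacle is the estimate in Step 2: carrying out the loop/path removal so that the partition function cancels cleanly (no extensive factors survive), and controlling the multiplicative change of $\prod_x(2n_x-1)!!\,\e{-U(n_x)}$ under the removal uniformly in the configuration --- this is precisely what the bound $(2n-1)!!\,\e{-U(n)}\le C^n$ is tailored to deliver.
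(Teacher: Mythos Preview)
Your argument for part~(a) is correct and is exactly the paper's proof.

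For part~(b), your approach is genuinely different from the paper's, and Step~2 has a real gap. The ``path removal'' you propose for the open arc $A_j$ does not preserve the defining parity constraint of $\caM_\caG$: subtracting $t_e(A_j)$ from each $m_e$ flips the parity of $\sum_{e\ni x}m_e$ precisely at the two endpoints of the arc, so the resulting $\bsm'$ is not a legal link configuration and cannot be absorbed into $Z_\caG(\alpha,\bsJ)$. If instead you try to bypass the removal and bound the numerator crudely as in part~(a) --- replacing $\alpha^{\lambda}\le\bar\alpha^{\sum_e m_e}$, summing over pairings, and dropping the evenness constraint --- then every edge with $t_e(A_j)=0$ contributes a full factor $\e{\bar\alpha CJ_e}$, and you are left with the extensive factor $\prod_{e:\,t_e(A_j)=0}\e{\bar\alpha CJ_e}$ that the trivial lower bound $Z_\caG\ge1$ cannot cancel. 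Your own remark that the obstacle is to ensure ``no extensive factors survive'' is exactly right; the mechanism you sketch does not achieve it. (There is a secondary issue: the inequality you invoke, $(2n-1)!!\,\e{-U(n+r)}\le C^{n+r}$, controls the \emph{new} pairing count against the \emph{old} potential, which is not the ratio one needs; a genuine ratio bound $\frac{(2n-1)!!\,\e{-U(n)}}{(2n'-1)!!\,\e{-U(n')}}\le C^{n-n'}$ would require a lower bound on $(2n'-1)!!\,\e{-U(n')}$ that the hypothesis does not provide.) The Cauchy--Schwarz splitting into two arcs does not help, since the obstruction is already present for each $B_j$ separately.

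The paper proceeds quite differently. Instead of following a single loop as a walk, it conditions on the connected component $\caG'=(\caV',\caE')$ of $x_0$ in the graph of occupied edges $\{e:m_e\ge1\}$. Because loops cannot straddle $\caG'$ and its complement, the weight factorises exactly as a sum over configurations on $\caG'$ times $Z_{\caG\setminus\caG'}(\alpha,\bsJ)$, and the elementary bound $Z_{\caG\setminus\caG'}\le Z_\caG$ gives the clean cancellation you were after. One then applies Markov's inequality to $\sum_{y\in\caV'}n_y$ (which dominates $\ell^{\max}_{x_0}$), bounds the $\caG'$-sum exactly as in part~(a), and finally converts the sum over connected subgraphs $\caG'\ni x_0$ into a sum over walks from $x_0$ via the observation that every connected graph admits a closed walk traversing each edge exactly twice --- this last step is what produces the exponent $\tfrac12$ on each edge factor.
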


The upper bound in (b) involves a sum over walks of arbitrary lengths that start at $x_0$.
In many situations, such as graphs with bounded degrees and $J_e$ bounded uniformly, this sum is convergent when $\bsJ$ is small. Then all loops passing by the site $x_0$ are small, that is, their lengths are finite uniformly in the size of the graph.

\begin{proof}
The number of loops is less than $\frac12 \sum_e m_e$ so that $\alpha^{\lambda(\bsw)} \leq \bar\alpha^{\sum m_e}$. The number of pairing configurations is $\prod_x (2n_x-1)!!$. Neglecting the constraints on link numbers, we get
\be
\begin{split}
Z_\caG(\alpha,\bsJ) &\leq \sum_{(m_e)_{e\in\caE}} \bar\alpha^{\sum_e m_e} \biggl( \prod_{e\in\caE} \frac{J_e^{m_e}}{m_e!} \biggr) C^{\sum_x n_x(\bsm)} \\
&= \e{\bar\alpha C \sum_e J_e}.
\end{split}
\ee
We used $\sum_{x \in \caV} n_x(\bsm) = \sum_{e \in \caE} m_e$, which follows from Eq.\ \eqref{def local occupancy}. This proves (a).

For the claim (b), given a configuration $\bsm$, we consider the graph with set of vertices $\caV$ and with set of edges $\{ e \in \caE : m_e \geq1 \}$. Further, let $\caG' = (\caV',\caE') \subset \caG$ be the connected subgraph that contains the vertex $x_0$. We have
\be
\bbP_\caG^{\alpha,\bsJ} \bigl( \ell_{\rm max}(x_0) \geq n \bigr) \leq \bbP_\caG^{\alpha,\bsJ} \Bigl( \sum_{y\in\caE'} n_y \geq n \Bigr) \leq \e{-\eta n} \bbE_\caG^{\alpha,\bsJ} \Bigr[ \e{\eta \sum_{y\in\caE'} n_y} \Bigr].
\ee
The last bound follows from Markov's inequality. We now condition on the graph $\caG'$ in the equation below; the sum over $(\bsm,\bspi) : \caG'$ is a sum over link configurations on $\caE'$ so that the graph with edges $\{ e \in \caE' : m_e \geq 1 \}$ is connected, and over pairing configurations on $\caV'$. Thanks to factorisation properties we have
\bm
\bbE_\caG^{\alpha,\bsJ} \Bigr[ \e{\eta \sum_{y\in\caE'} n_y} \Bigr] = \frac1{Z_\caG(\alpha,\bsJ)} \sum_{\caG'} Z_{\caG \setminus \caG'}(\alpha,\bsJ) \\
\sum_{(\bsm,\bspi) \, : \, \caG'} \alpha^{\lambda(\bsw)} \Bigl( \prod_{e \in \caE'} \frac{J_e^{m_e}}{m_e!} \Bigr) \e{-\sum_{y \in \caV'} (U(n_y)-\eta n_y)}.
\end{multline}
Assuming that $U$ is normalised so that $U(0)=0$, which we can do without loss of generality, we have that $Z_{\caG \setminus \caG'}(\alpha,\bsJ) \leq Z_\caG(\alpha,\bsJ)$. Using similar estimates as in (a), we get
\be
\bbE_\caG^{\alpha,\bsJ} \Bigr[ \e{\eta \sum_{y\in\caE'} n_y} \Bigr] \leq \sum_{\caG'} \sum_{\bsm:\caG'} \prod_{e \in \caE'} \frac{(\e\eta \bar\alpha C J_e)^{m_e}}{m_e!}.
\ee
For any connected graph, there exists a walk that uses each edge exactly twice. This is easily seen by induction: knowing the walk for a given graph, and adding an edge, we get a new walk by crossing the new edge twice. The sum over connected graphs $\caG'$ can then be estimated by a sum over walks starting at the vertex $x_0$. The sum over $\bsm$ can be estimated by $\e{\e\eta \bar\alpha C J_e}-1$ at every edge, and we get the claim (b).
\end{proof}

We now introduce a random wire model with ``open" boundary conditions. The idea is to allow open loops that end at the boundary (we refer to them as open loops, although they are no real loops). The new graph is 
\be
\label{eq open graph}
\caG^{\rm b} = \bigl( \caV \cup \bar\caV, \caE \cup \bar\caE \bigr).
\ee
Here, $\bar\caV$ is an extra set of vertices (the boundary), and $\bar\caE$ is a set of edges between $\caV$ and $\bar\caV$, i.e.\ $\bar\caE \subset \caV \times \bar\caV$.

The set of link configurations is $\caM_{\caG^{\rm b}} \subset \bbN_0^{\caE \cup \bar\caE}$ and it satisfies the constraints that each site of $\caV$ is touched by an even number of links; there are no constraints at the sites of $\bar\caV$. The set of pairing configurations is $\caP_{\caG^{\rm b}}(\bsm)$; pairings are defined at the sites of $\caV$ only, not at $\bar\caV$. Loops are defined as before, except for {\it open loops} that start and end at the boundary --- they involve exactly two edges touching the boundary (closed loops do not pass by the boundary). Given a wire configuration $\bsw = (\bsm,\bspi) \in \caW_{\caG^{\rm b}}$, we let $\lambda(\bsw)$ denote the number of all loops, counting closed and open loops. The probability of a wire configuration with open boundary conditions is
\be
\bbP_{\caG^{\rm b}}^{\alpha,\bsJ}(\bsw) = \frac1{Z_{\caG^{\rm b}}(\alpha,\bsJ)} \alpha^{\lambda(\bsw)} \biggl( \prod_{e \in \caE \cup \bar\caE} \frac{J_e^{m_e}}{m_e!} \biggr) \exp \biggl\{ -\sum_{x \in \caV} U(n_x(\bsm)) \biggr\}.
\ee
The partition function $Z_{\caG^{\rm b}}(\alpha,\bsJ)$ is defined as expected, so that $\bbP_{\caG^{\rm b}}^{\alpha,\bsJ}$ is a probability distribution.

The main advantage of open boundary conditions is to allow us to introduce the event where a site belongs to long loops, namely that it is connected to the boundary. This is discussed in Section \ref{sec PD}.

\section{Loop correlations and Poisson-Dirichlet distribution}
\label{sec PD}

We now fix the graph to be a large box in $\bbZ^d$ with edges given by nearest-neighbours. We write $\Lambda_L = \{-L,\dots,L\}^d$ for the set of vertices, $\caE_L$ for the set of nearest-neighbours, and $\caG_L = (\Lambda_L,\caE_L)$ for this graph. We also assume that $J_e \equiv J$ is constant. In dimensions $d\geq3$, and if $J$ is large enough, we expect that macroscopic loops are present and that they are described by a Poisson-Dirichlet distribution.

\subsection{Joint distribution of the lengths of macroscopic loops}

These properties can be formulated in various ways. The most direct way is to look at the lengths of the loops in a large box. Recall that the length of a loop is the number of its links. Let $\bigl( \ell_1(\bsw), \ell_2(\bsw), \dots, \ell_k(\bsw) \bigr)$ be the sequence of the lengths of the loops of $\bsw$ in decreasing order, repeated with multiplicities; the number of loops is also random, $k = k(\bsw)$. The ``volume" occupied by the loops is defined as
\be
V(\bsw) = \sum_{j=1}^k \ell_j(\bsw) = \sum_{x\in\Lambda_L} n_x(\bsm).
\ee
We consider the following sequence, which is a random partition of the interval $[0,1]$:
\be
\biggl( \frac{\ell_1(\bsw)}{V(\bsw)}, \frac{\ell_2(\bsw)}{V(\bsw)}, \dots, \frac{\ell_k(\bsw)}{V(\bsw)} \biggr).
\ee
It is rather obvious that the number of microscopic loops (those whose lengths are bounded uniformly in $L$) scales like the volume $|\Lambda_L|$ of the system. Consequently, the tail of the random partition consists of tiny dust occupying a non-vanishing interval. On the other hand, the lengths of the longer loops are expected to be of order of the volume and to be described by a Poisson-Dirichlet distribution. The typical random partition is illustrated in Fig.\ \ref{fig partition}.

\begin{centering}
\bfig
\begin{picture}(0,0)%
\includegraphics{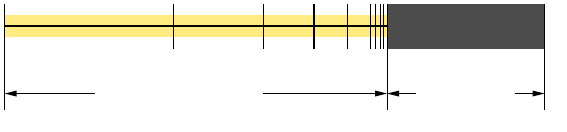}
\end{picture}%
\setlength{\unitlength}{2368sp}%
\begingroup\makeatletter\ifx\SetFigFont\undefined%
\gdef\SetFigFont#1#2#3#4#5{%
  \reset@font\fontsize{#1}{#2pt}%
  \fontfamily{#3}\fontseries{#4}\fontshape{#5}%
  \selectfont}%
\fi\endgroup%
\begin{picture}(7538,1726)(1141,-2965)
\put(2476,-2491){\makebox(0,0)[lb]{\smash{{\SetFigFont{8}{9.6}{\rmdefault}{\mddefault}{\updefault}{\color[rgb]{0,0,0}macroscopic, PD($\theta$)}%
}}}}
\put(6766,-2506){\makebox(0,0)[lb]{\smash{{\SetFigFont{8}{9.6}{\rmdefault}{\mddefault}{\updefault}{\color[rgb]{0,0,0}microscopic}%
}}}}
\put(6226,-2911){\makebox(0,0)[lb]{\smash{{\SetFigFont{8}{9.6}{\rmdefault}{\mddefault}{\updefault}{\color[rgb]{0,0,0}$m$}%
}}}}
\put(1141,-2911){\makebox(0,0)[lb]{\smash{{\SetFigFont{8}{9.6}{\rmdefault}{\mddefault}{\updefault}{\color[rgb]{0,0,0}$0$}%
}}}}
\put(8356,-2911){\makebox(0,0)[lb]{\smash{{\SetFigFont{8}{9.6}{\rmdefault}{\mddefault}{\updefault}{\color[rgb]{0,0,0}$1$}%
}}}}
\end{picture}%
\caption{A typical partition given by loop lengths in dimensions three and higher. The elements in the interval $[0,m]$ are distributed according Poisson-Dirichlet. The elements in the interval $[m,1]$ are due to microscopic loops and they have zero width.}
\label{fig partition}
\efig
\end{centering}

One can formulate the Poisson-Dirichlet conjecture as follows. There exists $m \in [0,1]$ (and $m>0$ when $d\geq3$ and $J$ large enough) such that
\begin{itemize}
\item For every $\eps>0$, we have
\be
\lim_{n\to\infty} \lim_{L\to\infty} \bbP_{\caG_L}^{\alpha,J} \Bigl( \sum_{j=1}^n \frac{\ell_j(\bsw)}{V(\bsw)} \in [m-\eps,m+\eps] \Bigr) = 1.
\ee
\item For every $n\in\bbN$ and as $L \to \infty$, the distribution of the vector $\bigl( \frac{\ell_1(\bsw)}{mV(\bsw)}, \dots, \frac{\ell_n(\bsw)}{mV(\bsw)} \bigr)$ converges to the Poisson-Dirichlet distribution PD($\frac\alpha2$) restricted to the first $n$ elements.
\end{itemize}

Let us recall that Poisson-Dirichlet is a one-parameter family of distributions on partitions of $[0,1]$. It is most easily defined using the random allocation (or ``stick breaking") construction. Namely, let $Y_1,Y_2,\dots$ be i.i.d.\ Beta$(1,\theta)$ random variables (that is, their probability density function is equal to $\theta (1-t)^{\theta-1}$ for $t \in [0,1]$ and is zero otherwise); we construct the sequence
\be
\bigl( Y_1, (1-Y_1) Y_2, (1-Y_1) (1-Y_2) Y_3, \dots).
\ee
It is not hard to check that the sum of these numbers give 1 almost surely. Rearranging the numbers in decreasing order, we get a random partition with Poisson-Dirichlet distribution PD($\theta$). See \cite{Kin,Pit} for more information.

The heuristics for this conjecture is explained in the next subsection; it also contains the calculation of the Poisson-Dirichlet parameter, $\theta = \frac\alpha2$.

\subsection{Heuristics and calculation of the Poisson-Dirichlet parameter}
\label{sec heuristics}

An important property of Poisson-Dirichlet is to be the stationary distribution of split-merge processes (also called coagulation-fragmentation) \cite{Tsi,DMZZ,Ber}. The following heuristic has already been explained in \cite{GUW, GLU, Uel1} for other loop soups; notice that the article \cite{GLU} contains numerical verifications of some of the steps. It is worth sketching the heuristic in some details since it allows to calculate --- non rigorously, but exactly --- the Poisson-Dirichlet parameter. For a fixed link configuration $\bsm$, we introduce a discrete-time Markov process on pairing configurations, i.e.\ on $\caP_{\caG_L}(\bsm)$. Let $T_\bsm(\bspi,\bspi')$ be the probability that, if the system is at $\bspi$ at time $t$, it moves to $\bspi'$ at time $t+1$. Assuming the process to be irreducible (that is, there are possible trajectories reaching all configurations of $\caP_{\Lambda_L}(\bsm)$), a sufficient condition for a measure to be stationary is that it satisfies the {\it detailed balance condition}
\be
\bbP_{\caG_L}^{\alpha,J}(\bsm,\bspi) T_\bsm(\bspi,\bspi') = \bbP_{\caG_L}^{\alpha,J}(\bsm,\bspi') T_\bsm(\bspi',\bspi).
\ee
We only consider changes that involve rewiring two pairs at a single site. This is illustrated in Fig.\ \ref{fig process}. The number of loops changes by at most one. We have
\be
\frac{\bbP_{\caG_L}^{\alpha,J}(\bsm,\bspi')}{\bbP_{\caG_L}^{\alpha,J}(\bsm,\bspi)} = \begin{cases} \alpha & \text{if } \lambda(\bsm,\bspi') = \lambda(\bsm,\bspi)+1, \\ 1 & \text{if } \lambda(\bsm,\bspi') = \lambda(\bsm,\bspi), \\ \alpha^{-1} & \text{if } \lambda(\bsm,\bspi') = \lambda(\bsm,\bspi)-1. \end{cases}
\ee
We need to choose the transition probabilities so that the ratio $\frac{T_\bsm(\bspi,\bspi')}{T_\bsm(\bspi',\bspi)}$ is equal to the above equation when $\bspi$ and $\bspi'$ differ by just one rewiring. There are many possibilities; we can take
\be
T_\bsm(\bspi,\bspi') = \frac{C}{|\Lambda_L|} \, \frac2{(\begin{smallmatrix} 2n_x \\ 2 \end{smallmatrix})} \cdot \begin{cases} \alpha^{1/2} & \text{if } \lambda(\bsm,\bspi') = \lambda(\bsm,\bspi)+1, \\ 1 & \text{if } \lambda(\bsm,\bspi') = \lambda(\bsm,\bspi), \\ \alpha^{-1/2} & \text{if } \lambda(\bsm,\bspi') = \lambda(\bsm,\bspi)-1. \end{cases}
\ee
Here, $C$ is a constant that is small enough so that $\sum_{\bspi' \neq \bspi} T_\bsm(\bspi,\bspi') \leq 1$ (it affects the speed of the process but not its stationary distribution). Notice that $(\begin{smallmatrix} 2n_x \\ 2 \end{smallmatrix})$ is the number of pairs of endpoints at $x$, and there are exactly two choices whose rewiring gives $\bspi'$.

\bfig
\includegraphics[height=40mm]{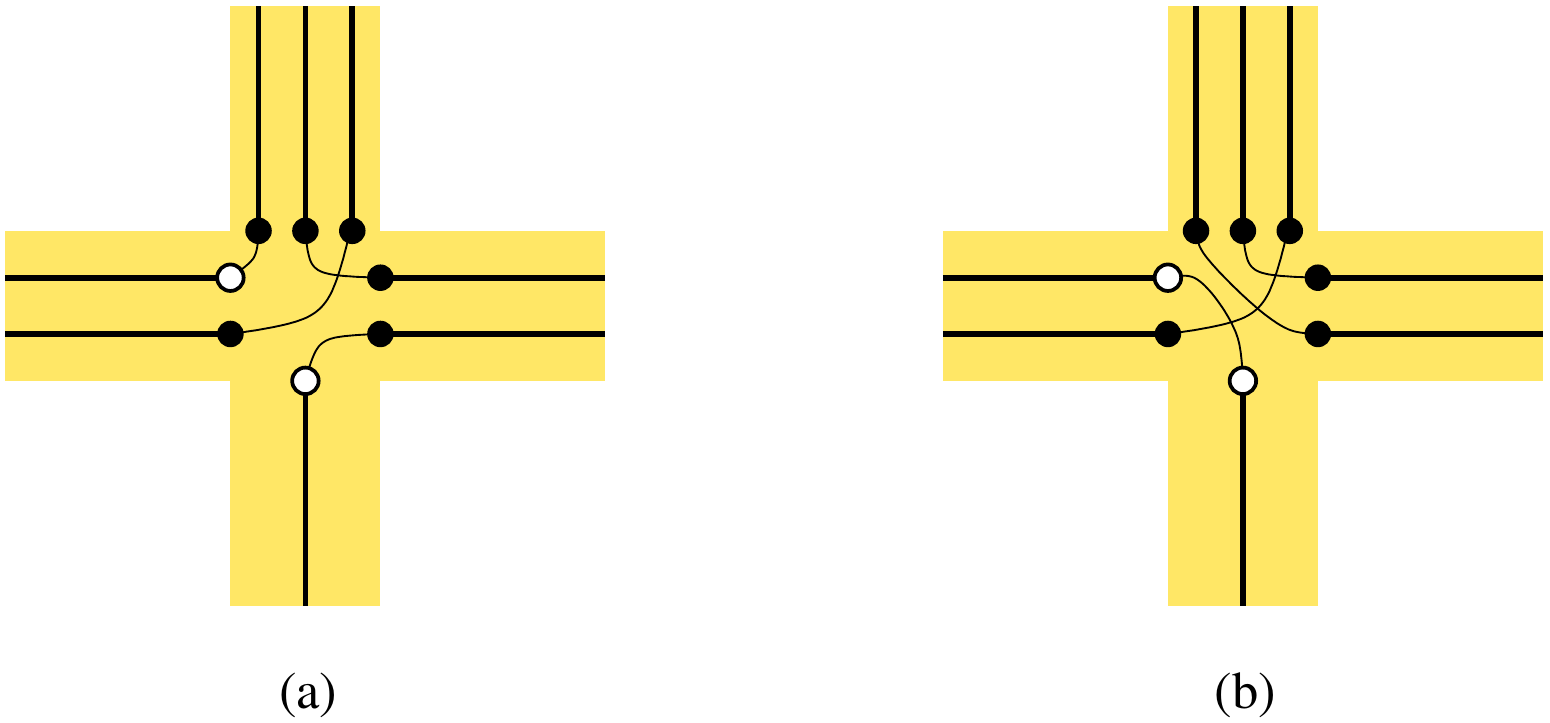}
\caption{Illustration for the Markov process: (a) Two endpoints are selecting at random. (b) The new pairing at this site, if the change is accepted.}
\label{fig process}
\efig

In words, we choose a site uniformly at random, then pick two endpoints uniformly at random, and accept the rewiring with probability $C \alpha^{1/2}, C, C \alpha^{-1/2}$ according to whether the number of loops increases by 1, stays constant, or decreases by 1. It is clear that $T_\bsm$ satisfies the detailed balance condition, and also that the process is irreducible on $\caP_{\Lambda_L}(\bsm)$ for a fixed $\bsm$.

Next, we look at the resulting process on partitions. We get a split-merge process with a priori complicated rates. But we can discard all rewirings that involve microscopic loops, as they have negligible effect in the infinite-volume limit. Much more interesting are changes that affect macroscopic loops. If we select endpoints belonging to different loops, the rewiring always merges them. If we select endpoints belonging to the same loop, the rewiring may split it, or just rearrange it (this is analogous to $0 \leftrightarrow 8$). The essence of the conjecture is that macroscopic loops merge well, and the number of pairs of endpoints that allow two macroscopic loops $\gamma,\gamma'$ to merge is approximately equal to $c \ell_\gamma \ell_{\gamma'}$ for a constant $c$ that is independent of $\gamma,\gamma'$. Further, the number of pairs of endpoints that allow a macroscopic loop $\gamma$ to split is approximately equal to $\frac14 c \ell_\gamma^2$, with the same constant $c$ as before. The factor $\frac14 = \frac12 \cdot \frac12$ is there because pairs within a loop should be counted once, and only half the pairs cause a split and not a rearrangement.

The conclusion of this heuristic is that, as the volume becomes large, the effective split-merge process on partitions behaves like the standard, mean-field process where two partition elements $\eta,\eta'$ merge at rate $\frac{2c}{\sqrt\alpha} \eta \eta'$ and an element $\eta$ splits at rate $\frac{c\sqrt\alpha}2 \eta^2$; moreover, the element is split uniformly. It is known that the Poisson-Dirichlet distribution with parameter $\theta = \frac\alpha2$ is the invariant measure for this process \cite{Tsi,Ber,Uel2} (partial results about uniqueness can be found in \cite{DMZZ}).

This long heuristics was needed in order to identify the correct parameter. This justifies the above conjecture.

\subsection{Poisson-Dirichlet correlations}
\label{sec PD corr}

As we argue below in Section \ref{sec loop corr}, the probability that points belong to the same loop, knowing that they belong to long loops, is given by the probability that random points in the interval $[0,1]$ belong to the same partition element with Poisson-Dirichlet distribution. We collect now the relevant formul\ae.

Let $u_1,\dots,u_k \in [0,1]$ and let $(z_1,z_2,\dots)$ be a partition of $[0,1]$. We denote $X_{u_1,\dots,u_k}(z_1,z_2,\dots)$ the set partition of $\{1,\dots,k\}$ where $i,j$ belong to the same subset if and only if $u_i, u_j$ belong to the same partition element. Further, if $X$ is a set partition of $\{1,\dots,k\}$, let
\be
M_\theta(X; u_1,\dots,u_k) = \bbP_{{\rm PD}(\theta)} \bigl( X_{u_1,\dots,u_k} = X \bigr).
\ee
Finally, let
\be
M_\theta(X) = \bbE_{U_1,\dots,U_k} \bigl[ M_\theta(X; U_1,\dots,U_k) \bigr],
\ee
where the latter expectation is taken over $k$ i.i.d.\ random variables $U_1,\dots,U_k$ with uniform distribution on $[0,1]$

The number $M_\theta(X)$ depends only on the sizes of the partition elements of $X$. If $X = \cup_{i=1}^\ell X_i$ with $|X_i|=n_i$, then, with $Z_i$ denoting the $i$th element of a Poisson-Dirichlet random partition, we have
\be
\label{correl PD}
\begin{split}
M_\theta(X) &= \sumtwo{i_1,\dots,i_\ell \geq 1}{\rm distinct} \int_0^1 \dd u_1 \dots \int_0^1 \dd u_k \; \bbP_{{\rm PD}(\theta)} \bigl( \text{$u_i \in$ $i_j$th element for all $i \in X_j$} \bigr) \\
&= \sumtwo{i_1,\dots,i_\ell \geq 1}{\rm distinct} \bbE_{{\rm PD}(\theta)} \bigl[ Z_{i_1}^{n_1} \dots Z_{i_\ell}^{n_\ell} \bigr] \\
&= \frac{\theta^\ell \Gamma(\theta) \Gamma(n_1) \dots \Gamma(n_\ell)}{\Gamma(\theta+n_1+\dots+n_\ell)}.
\end{split}
\ee
The latter formula seems well-known to experts but it does not appear often in the literature. It is written in \cite{NCSOS} where it is derived using ``supersymmetry" calculations in a loop $O(N)$ model. A calculation within Poisson-Dirichlet can be found in \cite{Uel2}.

In the present article we need the probability that the random set partition is even, that is, all its subsets have an even number of elements. Let $\caX_{2k}^{\rm even}$ denote the set of even partitions of $\{1,\dots,2k\}$, and let
\be
\label{even PD corr}
M_\theta^{\rm even}(2k) = \sum_{X \in \caX_{2k}^{\rm even}} M_\theta(X).
\ee

\begin{proposition}
For all $\theta>0$ and all $k\in\bbN$, we have
\[
M_\theta^{\rm even}(2k) = \frac{\Gamma(2k+1) \Gamma(k+\frac\theta2) \Gamma(\theta)}{\Gamma(2k+\theta) \Gamma(k+1) \Gamma(\frac\theta2)}.
\]
\end{proposition}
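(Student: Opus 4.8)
The strategy is to turn the sum over even set partitions into a single coefficient extraction from an exponential generating function, using the correlation formula~\eqref{correl PD}. Since $M_\theta(X)$ depends only on the block sizes of $X$, a partition $X\in\caX_{2k}^{\rm even}$ with $\ell=\ell(X)$ blocks of (even) sizes $n_1,\dots,n_\ell$ has, by~\eqref{correl PD} and $n_1+\dots+n_\ell=2k$,
\[
M_\theta(X)=\frac{\theta^{\ell}\,\Gamma(\theta)\,\Gamma(n_1)\cdots\Gamma(n_\ell)}{\Gamma(\theta+2k)}.
\]
Summing over $X$ gives
\[
M_\theta^{\rm even}(2k)=\frac{\Gamma(\theta)}{\Gamma(\theta+2k)}\,S_{2k},\qquad
S_{2k}:=\sum_{X\in\caX_{2k}^{\rm even}}\theta^{\ell(X)}\prod_{i=1}^{\ell(X)}\Gamma(n_i),
\]
so that the task reduces to evaluating $S_{2k}$, the partition function of even set partitions of $\{1,\dots,2k\}$ in which every block $B$ carries the weight $\theta\,\Gamma(|B|)=\theta\,(|B|-1)!$.

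For this I would invoke the exponential formula. Setting $S_0:=1$ and $S_n:=0$ for $n$ odd (harmless, since a set of odd cardinality has no even partition), the numbers $S_n$ are precisely the set-partition coefficients generated, with one marker $\theta$ per block, by the per-block exponential generating function
\[
C(x)=\sumtwo{n \text{ even}}{n\geq2}\Gamma(n)\,\frac{x^n}{n!}=\sumtwo{n \text{ even}}{n\geq2}\frac{x^n}{n}=\tfrac12\bigl(-\log(1-x)-\log(1+x)\bigr)=-\tfrac12\log(1-x^2).
\]
The exponential formula then gives $\sum_{n\geq0}S_n\,x^n/n!=\exp(\theta\,C(x))=(1-x^2)^{-\theta/2}$ (a genuine power series, convergent for $|x|<1$). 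Reading off the coefficient of $x^{2k}$ from the binomial series, $[x^{2k}](1-x^2)^{-\theta/2}=\binom{k-1+\theta/2}{k}=\Gamma(k+\tfrac\theta2)/(\Gamma(\tfrac\theta2)\,\Gamma(k+1))$, whence $S_{2k}=\Gamma(2k+1)\,\Gamma(k+\tfrac\theta2)/(\Gamma(\tfrac\theta2)\,\Gamma(k+1))$. Substituting into the display above yields
\[
M_\theta^{\rm even}(2k)=\frac{\Gamma(\theta)\,\Gamma(2k+1)\,\Gamma(k+\tfrac\theta2)}{\Gamma(2k+\theta)\,\Gamma(k+1)\,\Gamma(\tfrac\theta2)},
\]
which is the asserted identity (the numerator carries the normalising factor $\Gamma(\theta)$ of~\eqref{correl PD}; one may also rewrite the right-hand side via the Legendre duplication formula).

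The only step that I expect to need real care is the correct identification of this generating function: one must use the \emph{full} per-block weight $\Gamma(|B|)$ dictated by~\eqref{correl PD} --- it is the cancellation $\Gamma(n)/n!=1/n$ that produces the logarithm in $C(x)$ --- and the block-counting parameter must enter as $\exp(\theta\,C(x))$. Everything else is routine: the evaluation of $C(x)$ via the two logarithmic series, and the standard fact that a set partition of $\{1,\dots,n\}$ with $m_j$ blocks of size $j$ is counted with multiplicity $n!/\prod_j\bigl((j!)^{m_j}m_j!\bigr)$, which is exactly what makes $\exp(\theta\,C(x))$ the right generating function. If one would rather avoid the exponential formula, conditioning on the block containing the element $1$ gives the recursion $S_{2k}=\sum_{j=1}^{k}\binom{2k-1}{2j-1}\,\theta\,(2j-1)!\,S_{2k-2j}$ with $S_0=1$, and one then checks by induction that $S_{2k}=\Gamma(2k+1)\Gamma(k+\tfrac\theta2)/(\Gamma(\tfrac\theta2)\Gamma(k+1))$ solves it; the generating-function argument is, however, the cleaner of the two.
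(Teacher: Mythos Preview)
Your argument is correct and takes a genuinely different route from the paper. You substitute the explicit block-moment formula~\eqref{correl PD} into the definition~\eqref{even PD corr}, reduce the sum over even set partitions to the weighted sum $S_{2k}$, and evaluate it via the exponential formula, obtaining the closed exponential generating function $(1-x^2)^{-\theta/2}$; the coefficient extraction is then immediate. This is self-contained and uses only standard enumerative combinatorics. The paper instead introduces the auxiliary function $\Phi(h)=\bbE_{{\rm PD}(\theta)}\bigl[\prod_i\cosh(hZ_i)\bigr]$, uses a random-sign trick to identify $\Phi^{(2k)}(0)$ with $M_\theta^{\rm even}(2k)$, and then quotes a closed form for $\Phi$ from~\cite{Uel2}. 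Your approach buys elementarity and independence from an external reference; the paper's approach buys a conceptual explanation of \emph{why} even partitions are singled out (they are exactly what survives averaging over independent $\pm1$ signs attached to the partition blocks). Your parenthetical remark about the factor $\Gamma(\theta)$ is also well taken: the series for $\Phi(h)$ quoted from~\cite{Uel2} gives $\Phi(0)=1/\Gamma(\theta)$ rather than $1$, so the displayed proposition is off by precisely that factor --- harmless at $\theta=1$, the only value used downstream, but your version is the correctly normalised one for general $\theta$.
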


In the case $\theta=1$, the formula above reduces to
\be
\label{M1 even}
M_{\theta=1}^{\rm even}(2k) = \frac{(2k-1)!!}{2^k k!}.
\ee

\begin{proof}
We use the following trick:\footnote{We are grateful to Peter M\"orters for the suggestion.} Consider random partitions of $[0,1]$ and a random sequence of signs $(\eps_1,\eps_2,\dots)$ where $\eps_i$ are i.i.d.\ and take values $\pm1$ with probability $\frac12$. Let
\be
\Phi(h) = \bbE_{{\rm PD}(\theta)} \biggl[ \prod_{i\geq1} \cosh(hZ_i) \biggr] = \bbE_{{\rm PD}(\theta) \times (\eps_i)} \bigl[ \e{\sum_{i\geq1} h \eps_i Z_i} \bigr].
\ee
Here, $Z_i$s are the elements of the random partition with PD($\theta$) distribution and $h \in \bbR$. $\Phi(h)$ is an even function and
\be
\begin{split}
\frac{\dd^{2k}}{\dd h^{2k}} \Phi(h) \Big|_{h=0} &= \sum_{i_1,\dots,i_{2k}} \bbE_{{\rm PD}(\theta) \times (\eps_i)} \bigl[ \eps_{i_1} \dots \eps_{i_{2k}} Z_{i_1} \dots Z_{i_{2k}} \bigr] \\
&= \sum_{X \in \caX_{2k}^{\rm even}} M_\theta(X).
\end{split}
\ee
The function $\Phi(h)$ was calculated in \cite[Eq.\ (4.18)]{Uel2}; it is equal to
\be
\Phi(h) = \frac{\Gamma(\theta)}{\Gamma(\frac\theta2)} \sum_{n\geq0} \frac{\Gamma(n+\frac\theta2)}{n! \, \Gamma(2n+\theta)} h^{2n}.
\ee
Differentiating $2k$ times and looking at the coefficient of $h^0$, we get the claim of the proposition.
\end{proof}

\subsection{Loop correlations --- Conjectures}
\label{sec loop corr}
We now formulate the Poisson-Dirichlet conjecture in terms of loop correlations. This is more natural in the context of statistical mechanics, and this is the form that we can prove in a special case.

The idea behind loop connectivity is to consider $k$ points and to look at how these points are connected by the loops. A complication is that many loops may pass by a given site, and also that the same loop may pass many times. We then introduce a label on the pairings. Namely, we assign the labels $1,\dots,n_x(\bsm)$ to the pairs of the pairing $\pi_x$. Given distinct sites $\bsx = (x_1,\dots,x_k)$, pair labels $\bsq = (q_1,\dots,q_k)$, and a set partition $X = \{X_1,\dots,X_\ell\}$ of $\{1,\dots,k\}$, we introduce the event
\be
\begin{split}
\label{def event 1}
E_X(\bsx, \bsq) = \Bigl\{ \bsw &\in \caW_\caG : \text{ $n_{x_i} \geq q_i$ for $i=1,\dots,k$, and $(x_i,q_i), (x_j,q_j)$ belong to} \\
&\text{the same loop iff $i,j$ belong to the same partition element of $X$} \Bigr\}.
\end{split}
\ee
In other words, we look at the partition of the points $(x_j,q_j)_{j=1}^k$ given by the loops, and $E_X(\bsx,\bsq)$ is the event where this partition is equal to $X$. We also use the event $E_\infty(\bsx,\bsq)$, the set of wire configurations where all $(x_i,q_i)$ belong to long loops: With $\ell(x,q)$ denoting the length of the loop passing through the $q$th pair at the site $x \in \Lambda_L$,
\be
E_\infty(\bsx,\bsq) = \bigl\{ \bsw \in \caW_{\caG_L} : \ell(x_i,q_i) \geq \tilde\ell_L \text{ for } i=1,\dots,k \bigr\},
\ee
where the cutoff $\tilde\ell_L$ is chosen so that $\lim_{L\to\infty} \tilde\ell_L = \infty$ and $\lim_{L\to\infty} \tilde\ell_L/L^d = 0$.

\begin{centering}
\bfig
\begin{picture}(0,0)%
\includegraphics{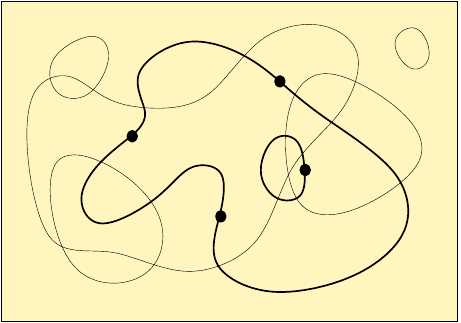}
\end{picture}%
\setlength{\unitlength}{1776sp}%
\begingroup\makeatletter\ifx\SetFigFont\undefined%
\gdef\SetFigFont#1#2#3#4#5{%
  \reset@font\fontsize{#1}{#2pt}%
  \fontfamily{#3}\fontseries{#4}\fontshape{#5}%
  \selectfont}%
\fi\endgroup%
\begin{picture}(8152,5734)(1175,-6078)
\put(5251,-4411){\makebox(0,0)[lb]{\smash{{\SetFigFont{7}{8.4}{\rmdefault}{\mddefault}{\updefault}{\color[rgb]{0,0,0}$x_4$}%
}}}}
\put(6301,-1636){\makebox(0,0)[lb]{\smash{{\SetFigFont{7}{8.4}{\rmdefault}{\mddefault}{\updefault}{\color[rgb]{0,0,0}$x_2$}%
}}}}
\put(6751,-3511){\makebox(0,0)[lb]{\smash{{\SetFigFont{7}{8.4}{\rmdefault}{\mddefault}{\updefault}{\color[rgb]{0,0,0}$x_3$}%
}}}}
\put(3676,-2986){\makebox(0,0)[lb]{\smash{{\SetFigFont{7}{8.4}{\rmdefault}{\mddefault}{\updefault}{\color[rgb]{0,0,0}$x_1$}%
}}}}
\end{picture}%
\caption{Illustration for loop correlations between distant points. In this realisation the set partition is $X = \bigl\{ \{1,2,4\}, \{3\} \bigr\}$.}
\label{fig loop correl}
\efig
\end{centering}

We consider a ``splashing sequence" $\bsx^{(n)} = (x_1^{(n)}, \dots, x_k^{(n)})$, that is, a sequence of sites in $\bbZ^d$ that satisfies
\be
\label{def splashing}
\lim_{n\to\infty} \min_{1 \leq i,j \leq k} \| x_i^{(n)} - x_j^{(n)} \| = \infty.
\ee
The Poisson-Dirichlet conjecture can be formulated as follows: Let $X$ be a set partition without singletons; in the limits $L\to\infty$ then $n\to\infty$, the probability of $E_X(\bsx^{(n)},\bsq)$ involves the probability $\bbP_{\bbZ^d}^{\alpha,J} \bigl( E_\infty(x_i^{(n)},q_i))$ that the $q_i$th pair at $x_i^{(n)}$ belongs to macroscopic loops, and the probability $M_\theta(X)$ that $k$ random numbers placed in a random partition, yields the set partition $X$. More precisely, we expect that for all set partitions $X$ without singletons, we have
\be
\label{conj explained}
\begin{split}
\lim_{n\to\infty} \lim_{L\to\infty} \bbP_{\caG_L}^{\alpha,J} &\bigl( E_X(\bsx^{(n)},\bsq) \bigr) = \lim_{n\to\infty} \lim_{L\to\infty} \bbP_{\caG_L}^{\alpha,J} \bigl( E_X(\bsx^{(n)},\bsq) \cap E_\infty(\bsx^{(n)},\bsq) \bigr) \\
&= \lim_{n\to\infty} \lim_{L\to\infty} \bbP_{\caG_L}^{\alpha,J} \bigl( E_X(\bsx^{(n)},\bsq) \big| E_\infty(\bsx^{(n)},\bsq) \bigr) \; \bbP_{\caG_L}^{\alpha,J} \bigl( E_\infty(\bsx^{(n)},\bsq) \bigr) \\
&= M_{\frac\alpha2}(X) \prod_{i=1}^k \bbP_{\bbZ^d}^{\alpha,J} \bigl( E_\infty(0,q_i) \bigr).
\end{split}
\ee
The first identity should not hold for $X$ with singletons, since the probability that the corresponding points belong to small loops is positive.
Letting $m(d,J) = \sum_{q\geq1} \bbP_{\bbZ^d}^{\alpha,J}\bigl( E_\infty(0,q) \bigr)$, we can also formulate the conjecture as
\be
\label{main conj}
\lim_{n\to\infty} \lim_{L\to\infty} \sum_{\bsq \in \bbN^k} \bbP_{\caG_L}^{\alpha,J} \bigl( E_X(\bsx^{(n)},\bsq) \bigr) =m(d,J)^k \, M_{\frac\alpha2}(X).
\ee
$M_{\frac\alpha2}(X)$ can be found in Eq.\ \eqref{correl PD}.

We now formulate a revised conjecture; it is less appealing but it is closer to what is proved in Theorem \ref{thm PD} below. For all splashing sequences $\bsx^{(n)} = (x_1^{(n)}, \dots, x_k^{(n)})$ and all set partitions $X$ of $\{1,\dots,k\}$ (without singletons), we can repeat the steps of \eqref{conj explained} and we obtain
\be
\lim_{n\to\infty} \lim_{L\to\infty} \bbE_{\caG_L}^{\alpha,J} \biggl[ 1_{E_X(\bsx^{(n)},\bsq)} \prod_{j=1}^k \frac1{n_{x_j^{(n)}}+1} \biggr] = \Bigl( \prod_{j=1}^k \tilde m_{q_j}(d,J) \Bigr) M_{\frac\alpha2}(X),
\ee
with $\tilde m_q(d,J)$ given by
\be
\tilde m_q(d,J) = \lim_{L\to\infty} \bbE_{\caG_L^{\rm b}}\Bigl[ 1_{E_\infty(0,q)} \frac1{n_0+1} \Bigr].
\ee
Letting $\tilde m(d,J) = \sum_{q\geq1} \tilde m_q(d,J)$, the Poisson-Dirichlet conjecture states that for any $k$ and any set partition $X$ of $\{1,\dots,k\}$ without singletons, we have
\be
\label{conj PD corr}
\lim_{n\to\infty} \lim_{L\to\infty} \sum_{\bsq \in \bbN^k} \bbE_{\caG_L}^{\alpha,J} \biggl[ 1_{E_X(\bsx^{(n)},\bsq)} \prod_{j=1}^k \frac1{n_{x_j^{(n)}}+1} \biggr] = \tilde m(d,J)^k M_{\frac\alpha2}(X).
\ee
As in the version \eqref{main conj} of the conjecture, $\tilde m(d,J)$ is related to the density of points in long loops and is model-dependent; $M_{\frac\alpha2}(X)$ is the term that signals the presence of the Poisson-Dirichlet distribution PD($\frac\alpha2$) for the lengths of the long loops.

Replacing $k$ by $2k$, summing over even set partitions, and using \eqref{even PD corr}, we get
\be
\label{partial conj PD}
\sum_{X \in \caX_{2k}^{\rm even}} \lim_{n\to\infty} \lim_{L\to\infty} \sum_{\bsq \in \bbN^{2k}} \bbE_{\caG_L}^{\alpha,J} \biggl[ 1_{E_X(\bsx^{(n)},\bsq)} \prod_{j=1}^k \frac1{n_{x_j^{(n)}}+1} \biggr] = \tilde m(d,J)^{2k} M_{\frac\alpha2}^{\rm even}(2k).
\ee
This is a {\it weaker} conjecture than \eqref{conj PD corr}. We prove it in a specific random wire model, see Theorem \ref{thm PD} in the next section.

\section{Main results --- Long loops and their joint distribution}
\label{sec results}

We can now formulate the main result of this article, which strongly hints towards the presence of the Poisson-Dirichlet distribution in a class of models of random wires. We restrict to the random wire model with loop parameter $\alpha=2$ and potential function defined by
\be
\label{def U2}
\e{-U(n)} = \frac1{n!}.
\ee
The graph is $\caG_L = (\Lambda_L,\caE_L)$ with $\Lambda_L = \{-L,\dots,L\}^d$ and $\caE_L$ is the set of nearest-neighbours. We choose $J_e = J$ for all $e \in \caE_L$. We also consider the graph $\caG_L^{\rm b}$ with boundary conditions; the set of vertices is $\Lambda_L$ with the external boundary
\be
\partial\Lambda_L = \bigl\{ y \in \bbZ^d \setminus \Lambda_L : \exists x \in \Lambda_L \text{ such that } \|x-y\|=1 \bigr\}.
\ee
Edges of $\caG_L^{\rm b}$ are the nearest-neighbours in $\Lambda_L$, and the edges between $\Lambda_L$ and $\partial \Lambda_L$. On this graph some loops are open, with endpoints on $\partial\Lambda_L$. Let $\tilde n_x(\bsw)$ be the random variable for the number of times that open loops pass by the site $x\in\Lambda_L$. In other words, for $\bsw=(\bsm,\bspi)$, we let
\be
\tilde n_x(\bsw) = \tfrac12 \bigl| \tilde L(\bsw) \cap \{ e \in \caE_L : e \ni x \} \bigr|,\label{eq open loops x}
\ee
where $\tilde L(\bsw)$ is the set of links that are connected to the boundary. We then define
\be
\label{def m tilde}
\tilde m(d,J) = \lim_{L\to\infty} \bbE_{\caG_L^{\rm b}}^{2,J} \Bigl[ \frac{\tilde n_0}{n_0+1} \Bigr].
\ee
Ignoring the denominator in the expectation, $\tilde m(d,J)$ gives the average number of pairs at the origin that belong to long loops. The reason why the denominator is present is that $\tilde m(d,J)$ can be written in terms of spin correlations; this allows to establish the following properties, our first main result.

\begin{theorem}
\label{thm long dens}
Let $\alpha=2$ and $U$ defined in Eq.\ \eqref{def U2}. Then the limit $L\to\infty$ of $\tilde m(d,J)$ in Eq.\ \eqref{def m tilde} exists. Further,
\begin{itemize}
\item[(a)] $\tilde m(d,J)$ is nondecreasing with respect to $d$ and $J$.
\item[(b)] $\tilde m(d,J) = 0$ when $J < 2^{-3/2} \log( 1 + \frac1{(2d)^2})$, for arbitrary dimension $d$.
\item[(c)] $\tilde m(d,J) = 0$ when $d=1,2$, for all $J\geq0$.
\item[(d)] For $d\geq3$, there exists $J_{\rm c}(d)<\infty$ such that $\tilde m(d,J) > 0$ if $J > J_{\rm c}(d)$ and $\tilde m(d,J) = 0$ if $J < J_{\rm c}(d)$.
\end{itemize}
\end{theorem}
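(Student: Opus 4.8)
The plan is to route every claim through the classical XY model. By the correspondence established in Section~\ref{sec wire O(N)}, the random wire model with $\alpha=2$ and $\e{-U(n)}=1/n!$ is the Brydges--Fr\"ohlich--Spencer / random-current representation of the classical XY model with coupling constant $J$, and under this dictionary $\tilde m(d,J)$ of Eq.~\eqref{def m tilde} is precisely the long-range order parameter of that model, i.e.\ $\tilde m(d,J)=\lim_{\|x\|\to\infty}\lim_{L\to\infty}\langle S_0\cdot S_x\rangle_{\Lambda_L}$, equivalently the square $\langle S_0^1\rangle_+^2$ of the spontaneous magnetisation in the symmetry-broken state. The first and central task is to make this identification precise: one must check that the weight $\tilde n_0/(n_0+1)$ is what integrating out the radial spin variable at the origin produces, that open loops reaching $\partial\Lambda_L$ are exactly the contributions of the boundary spins, and --- crucially --- that the two limits interchange as needed. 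Existence of the $L\to\infty$ limit then follows from monotonicity in the volume of the relevant finite-volume spin quantity (Griffiths' inequalities) together with the uniform bound $\langle S_0\cdot S_x\rangle\leq1$.

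Granting this dictionary, (a) is immediate. Monotonicity in $J$ is Ginibre's inequality for the XY model, $\partial_J\langle S_0\cdot S_x\rangle\geq0$ for ferromagnetic couplings. Monotonicity in $d$ follows by regarding $\bbZ^d$ as the sublattice $\{x_{d+1}=0\}$ of $\bbZ^{d+1}$: the edges in the new direction are ferromagnetic, so adding them only increases each $\langle S_0\cdot S_x\rangle$ uniformly in the volume, hence increases the limit $\tilde m$.

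Part (b) I would instead prove directly in the wire model, using Proposition~\ref{prop basic prop}(b); this also yields existence of the limit in that regime. For $\alpha=2$ we have $\bar\alpha=\sqrt2$, and $(2n-1)!!\,\e{-U(n)}=(2n-1)!!/n!=2^{-n}\binom{2n}{n}\leq2^n$, so the hypothesis holds with $C=2$ and $\bar\alpha C=2^{3/2}$. If $\tilde n_0(\bsw)\geq1$, some open loop passes through the origin, and any such loop has length at least $\mathrm{dist}(0,\partial\Lambda_L)=L+1$; since all degrees of $\caG_L^{\rm b}$ are at most $2d$ and the number of walks of length $k$ from the origin is at most $(2d)^k$, Proposition~\ref{prop basic prop}(b) gives, for any $\eta\geq0$,
\be
\bbE_{\caG_L^{\rm b}}^{2,J}\Bigl[\tfrac{\tilde n_0}{n_0+1}\Bigr]\;\leq\;\bbP_{\caG_L^{\rm b}}^{2,J}\bigl(\ell^{\max}_0\geq L+1\bigr)\;\leq\;\e{-\eta(L+1)}\sum_{k\geq0}\Bigl(2d\,\bigl(\e{\e\eta\,2^{3/2}J}-1\bigr)^{1/2}\Bigr)^{k}.
\ee
When $J<2^{-3/2}\log\!\bigl(1+1/(2d)^2\bigr)$ one can pick $\eta>0$ small enough that $(2d)^2\bigl(\e{\e\eta\,2^{3/2}J}-1\bigr)<1$, so the geometric series converges and the right-hand side tends to $0$ as $L\to\infty$; being nonnegative, $\tilde m(d,J)$ exists and equals $0$. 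For (c) one invokes the absence of continuous-symmetry breaking in low dimension: for $d=1$ the XY correlations decay exponentially (a finite-range one-dimensional system), and for $d=2$ the Mermin--Wagner / McBryan--Spencer bound forces $\langle S_0\cdot S_x\rangle_{\Lambda_L}\to0$ as $\|x\|\to\infty$ uniformly in $L$; in both cases the long-range order parameter vanishes, so $\tilde m(d,J)=0$ for all $J\geq0$.

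Finally, (d) combines (a) with the Fr\"ohlich--Simon--Spencer theorem \cite{FSS}: for $d\geq3$ the infrared bound shows the XY model has long-range order once $J$ is large, so $\tilde m(d,J)>0$ for some finite $J$. By the $J$-monotonicity from (a) the set $\{J\geq0:\tilde m(d,J)>0\}$ is an up-set, nonempty and (by (b)) a proper subset of $[0,\infty)$; hence $J_{\rm c}(d):=\inf\{J:\tilde m(d,J)>0\}$ lies in $(0,\infty)$ and satisfies $\tilde m(d,J)>0$ for $J>J_{\rm c}(d)$ and $\tilde m(d,J)=0$ for $J<J_{\rm c}(d)$. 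The hard part throughout is the first step: pinning down the spin/random-current dictionary with open boundary conditions precisely enough that $\tilde m(d,J)$ is literally the XY order parameter, limits and all, so that Ginibre's inequality, the Mermin--Wagner bound, and the infrared bound apply verbatim. Parts (a)--(d) are then essentially bookkeeping, and (b) is in any case self-contained within the wire model.
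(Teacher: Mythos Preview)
Your overall strategy --- translate to the XY model, then invoke Ginibre's inequalities for (a), the walk bound of Proposition~\ref{prop basic prop}(b) for (b), Mermin--Wagner for (c), and Fr\"ohlich--Simon--Spencer for (d) --- is exactly the paper's. The self-contained treatment of (b) is correct as written.

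There is, however, a genuine error in your ``dictionary'' step, and it propagates. By Proposition~\ref{prop spin open loops}(b) with $N=2$ one has
\[
\bbE_{\caG_L^{\rm b}}^{2,J}\Bigl[\tfrac{\tilde n_0}{n_0+1}\Bigr] \;=\; 2\,\langle \varphi_0^{(1)} \varphi_0^{(2)} \rangle_{\caG_L^{\rm b}}^{J,\bsone} \;=\; \langle \cos(2\phi_0) \rangle_{\caG_L^{\rm b}}^{J,\overline\bsphi\equiv 0},
\]
so $\tilde m(d,J)=\langle\cos(2\phi_0)\rangle_{\bbZ^d}^{J,0}$ is a \emph{one-point} expectation of the \emph{second harmonic} in the symmetry-broken state, not the limit $\lim_{\|x\|\to\infty}\langle S_0\cdot S_x\rangle$ nor $\bigl(\langle\cos\phi_0\rangle^{J,0}\bigr)^2$. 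These quantities are not equal: $\langle\cos(2\phi_0)\rangle = 2\langle\cos^2\phi_0\rangle-1$ involves the variance of $\cos\phi_0$, not just its mean. This does not harm (a), since Ginibre's inequalities apply equally well to $\langle\cos(2\phi_0)\rangle^{J,0}$, and your (b) is independent of the dictionary.

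Where it bites is (d) and, to a lesser extent, (c). For (d), FSS combined with Pfister's extremal decomposition yields $\langle\cos\phi_0\rangle_{\bbZ^d}^{J,0}>0$ for large $J$; with your misidentification you would be done. But the actual $\tilde m$ is $\langle\cos(2\phi_0)\rangle^{J,0}$, and positivity of the first harmonic does not formally imply positivity of the second. The paper closes this gap by invoking a correlation-inequality result of Pfister \cite[Corollary~3.6]{Pfi1}: $\langle\cos\phi_0\rangle^{J,0}>0$ implies $\langle\cos(2\phi_0)\rangle^{J,0}>0$. This is a real step that your proposal omits. For (c), McBryan--Spencer decay of the two-point function is the right input for your (incorrect) $\tilde m$, but for the correct one you need that every $2$D Gibbs state is rotation-invariant (hence $\langle\cos(2\phi_0)\rangle=0$); this is the content of \cite{Pfi1,ISV}, which is what the paper cites.
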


The proof of this theorem can be found in Section \ref{sec proofs}.

Theorem \ref{thm long dens} establishes that a positive fraction of sites are crossed by long loops when $d\geq3$ and $J$ is large enough. It is remarkable that $\tilde m(d,J)$ can be proved to be monotone nondecreasing in $d$ and in $J$. This property is expected to hold for fairly general random wire models; but the present proof, relying as it does on the equivalent $XY$ spin model and its correlation inequalities, cannot be extended easily.

The claim (b) follows from Proposition \ref{prop basic prop} and it holds for more general $\alpha$ and $U$. When $\alpha = 3,4,5,\dots$, and $U(n)$ is defined by Eq.\ \eqref{def U} with $N=\alpha$, the claim (c) also holds (its proof uses the continuous symmetry of the corresponding spin system).

Next we consider loop correlations between distant points. 
In order to formulate the result, we need to introduce the pressure $p(\alpha,J)$:
\be
p(\alpha,J) = \lim_{L\to\infty} \frac1{|\Lambda_L|} \log Z_{\caG_L}(\alpha,J).
\ee
The infinite-volume limit exists by a standard subadditive argument --- $Z_{\caG_L}$ is submultiplicative, and Proposition \ref{prop basic prop} (a) guarantees that the pressure is finite. It is easy to verify that $p(\alpha,\e{s})$ is convex in $s$, as the second derivative gives the variance of $\sum_e m_e$ and is therefore positive. It follows that $p(\alpha,J)$ is differentiable with respect to $J$ at all points, except possibly for a countable set.

Recall the notion of splashing sequences of sites in \eqref{def splashing}. Our second main result is the  weaker form of the Poisson-Dirichlet conjecture, see Eq.\ \eqref{partial conj PD}.

\begin{theorem}
\label{thm PD}
Let $\alpha=2$, $U$ defined in Eq.\ \eqref{def U2}, and $\tilde m(d,J)$ defined in Eq.\ \eqref{def m tilde}. We assume that $J$ is such that the pressure $p(2,J)$ is differentiable. Then for all $k \in \bbN$ and all splashing sequences of $2k$ sites, we have
\[
\sum_{X \in \caX_{2k}^{\rm even}} \lim_{n\to\infty} \lim_{L\to\infty} \sum_{\bsq \in \bbN^{2k}} \bbE_{\caG_L}^{2,J} \biggl[ 1_{E_X(\bsx^{(n)},\bsq)} \; \prod_{j=1}^{2k} \frac1{n_{x_j^{(n)}} + 1} \biggr] = \bigl( \tilde m(d,J) \bigr)^{2k} M_1^{\rm even}(2k).
\]
\end{theorem}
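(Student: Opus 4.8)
The plan is to exploit the equivalence, for $\alpha=2$ and the potential \eqref{def U2}, between the random wire model and the classical XY spin system that is developed in Section \ref{sec wire O(N)}. Under this correspondence the weighted loop-correlation quantities appearing on the left-hand side become spin correlation functions $\langle \prod_j s_{x_j}^{\cdot} \rangle$, and the event $E_X$ with the $\tfrac1{n_{x_j}+1}$ weights is precisely engineered so that summing over pair labels $\bsq$ and over all even set partitions $X$ collapses the combinatorics of ``which loop passes through $(x_j,q_j)$'' into the spin two-point structure. So the first step is to write
\[
\sum_{X \in \caX_{2k}^{\rm even}} \sum_{\bsq \in \bbN^{2k}} \bbE_{\caG_L}^{2,J} \Bigl[ 1_{E_X(\bsx^{(n)},\bsq)} \prod_{j=1}^{2k} \tfrac1{n_{x_j^{(n)}}+1} \Bigr]
\]
as an XY correlation of the $2k$ spins at the sites $x_j^{(n)}$ — concretely, a suitably normalised expectation of a product of spin components, using that at each site the insertion of $\tfrac1{n_x+1}$ together with the sum over the label $q$ reproduces the relative weight of a single spin variable among the $n_x$ ``legs''. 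I expect this step to reduce the left-hand side exactly to something like $\langle (s_{x_1}\cdot s_{x_2})\cdots\rangle$ symmetrised appropriately, i.e.\ a sum over pair partitions of spin two-point functions in the limit.

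Next I would take the limit $L\to\infty$ followed by the splashing limit $n\to\infty$. Here one invokes the two imported results: the Fröhlich–Simon–Spencer infrared bound / phase transition (giving that the truncated correlations are summable and the spontaneous magnetisation is well-defined), and Pfister's classification of translation-invariant extremal infinite-volume Gibbs states, which guarantees that in the splashing limit the Gibbs state decomposes into extremal states and the clustering of distant spins factorises through the magnetisation. Concretely, $\langle s_{x_i}\cdot s_{x_j}\rangle$ for widely separated points tends to $M^2$ where $M^2$ is the squared spontaneous magnetisation, and higher correlations of distant points factorise into sums over pairings with each pair contributing $M^2$ — this is where the even-partition combinatorics and the factor $(2k-1)!!/(2^k k!) = M_1^{\rm even}(2k)$ of \eqref{M1 even} emerge, since an even set partition of $\{1,\dots,2k\}$ into blocks contributes, after summing, with exactly the weights in \eqref{correl PD} at $\theta=1$. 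The remaining task is to identify $\tilde m(d,J)$ with (a constant multiple of) $M^2$; this should follow from the definition \eqref{def m tilde}, because $\tilde n_0/(n_0+1)$ counts (weighted) the legs at the origin belonging to boundary-connected loops, which under the spin correspondence is exactly the quantity that converges to the spontaneous magnetisation squared — this is also what Theorem \ref{thm long dens} is really computing.

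The combinatorial heart — and the step I expect to be the main obstacle — is the passage from ``sum over even set partitions $X$ and over pair labels $\bsq$ of the event $E_X$'' to a clean product of two-point functions with the coefficient $M_1^{\rm even}(2k)$. The difficulty is that a loop may pass through several of the marked sites and through a given site several times, so $E_X$ is not simply a product over pairs; one must check carefully that after weighting by $\prod_j \tfrac1{n_{x_j}+1}$ and summing over $\bsq$, all the ``internal'' multiplicities cancel and only the partition-into-even-blocks structure survives, matching term-by-term with the Poisson-Dirichlet formula \eqref{correl PD} at $\theta=\alpha/2=1$. The differentiability hypothesis on $p(2,J)$ is needed precisely to control this limit: it ensures (via convexity and the equivalence of ensembles / uniqueness of the relevant limit) that the pair-label sums converge and that the limiting state is a definite mixture with a well-defined magnetisation, so that no pathological oscillation spoils the identification. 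Once these are in place, comparing with \eqref{even PD corr}--\eqref{M1 even} gives the stated identity.
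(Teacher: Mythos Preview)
Your overall strategy --- translate to an XY spin correlation via Proposition~\ref{prop repr O(N)}~(b), take $L\to\infty$, use Pfister's extremal decomposition, and then cluster in the splashing limit --- is the same as the paper's. But you have misidentified both the spin observable and the mechanism that produces the factor $M_1^{\rm even}(2k)$, and this is a genuine gap.

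By Proposition~\ref{prop repr O(N)}~(b) with $N=2$, the left-hand side is exactly
\[
2^{2k}\,\Bigl\langle \prod_{j=1}^{2k}\varphi_{x_j^{(n)}}^{(1)}\varphi_{x_j^{(n)}}^{(2)}\Bigr\rangle_{\caG_L}^J
\;=\;\Bigl\langle \prod_{j=1}^{2k}\sin\bigl(2\phi_{x_j^{(n)}}\bigr)\Bigr\rangle_{\caG_L}^J.
\]
This is a product of \emph{single-site} observables, not a sum of inner products $\varphi_{x_i}\cdot\varphi_{x_j}$; so there is no Wick-type sum over pairings to perform, and the ``combinatorial obstacle'' you worry about does not arise --- Proposition~\ref{prop repr O(N)}~(b) already contains the full $\bsq$-sum and the $\tfrac1{n_{x_j}+1}$ factors on the nose. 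After applying Pfister's decomposition \eqref{symm state} and clustering in the extremal states, one gets
\[
\frac1{2\pi}\int_0^{2\pi}\Bigl(\bigl\langle\sin(2\phi_0)\bigr\rangle_{\bbZ^d}^{J,\psi}\Bigr)^{2k}\dd\psi
\;=\;\Bigl(\bigl\langle\cos(2\phi_0)\bigr\rangle_{\bbZ^d}^{J,0}\Bigr)^{2k}\cdot\frac1{2\pi}\int_0^{2\pi}\sin^{2k}(2\psi)\,\dd\psi,
\]
using $\langle\sin(2\phi_0)\rangle^{J,\psi}=\sin(2\psi)\,\langle\cos(2\phi_0)\rangle^{J,0}$. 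The Poisson--Dirichlet number $M_1^{\rm even}(2k)=\tfrac{(2k-1)!!}{2^k k!}$ is therefore produced by the \emph{angular average over the $O(2)$ symmetry}, not by a sum over pair partitions of two-point functions. Correspondingly, $\tilde m(d,J)$ is \emph{not} the squared magnetisation: by Proposition~\ref{prop spin open loops}~(b) and \eqref{phivarphi} it equals $\langle\cos(2\phi_0)\rangle_{\bbZ^d}^{J,0}$, the expectation of the second-harmonic observable in the extremal state. Your sketch, which tries to extract $(2k-1)!!$ pairings and identify $\tilde m$ with $M^2$, would not close; the correct computation is the short angular one above.
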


The proof of this theorem uses the connections to the classical XY model; it can be found in Section \ref{sec proofs}.

Theorem \ref{thm PD} gives a lot of information on the structure of long loops: They are present when $\tilde m(d,J)>0$; arbitrary sites have positive probability to belong to them; multiple long loops occur with positive probability. An important aspect of Theorem \ref{thm PD} is that it holds for all $k$ with the same constant $\tilde m(d,J)$. This is compatible with the Poisson-Dirichlet distribution PD($\theta$) with $\theta=1$; this is {\it incompatible} with PD($\theta$) with $\theta\neq1$ and with most other distributions on partitions. Theorem \ref{thm PD} is then a good step forward towards proving that the correlations due to long loops are given by PD(1).

\section{Random wire representation of classical $O(N)$ spin systems}
\label{sec wire O(N)}

We show now that the random wire model can be derived as a representation of classical $O(N)$ spin systems. In fact, the case $N=1$ is close to the random current representation of the Ising model \cite{GHS,Aiz,FV}. The general case $N\in\bbN$ can be seen as a reformulation of the Brydges-Fr\"ohlich-Spencer loop model \cite{BFS,FFS}; explicit relations between BFS loops and wire configurations can be found in \cite{Ben}.

We consider an arbitrary finite graph $\caG = (\caV,\caE)$. Let $\bsJ = (J_e)_{e\in\caE}$ be fixed parameters. We denote $\bsvarphi \in (\bbS^{N-1})^\caV$ the spin configurations. The hamiltonian of the $O(N)$ spin system is defined as
\be
\label{def ham spin}
H_\caG^\bsJ(\bsvarphi) = -\sum_{e = \{x,y\} \in \caE} 2J_e \; \varphi_x \cdot \varphi_y,
\ee
where $\varphi_x \cdot \varphi_y$ denotes the usual inner product of two $N$-component vectors.
The partition function is
\be
Z_\caG^{\rm spin}(\bsJ) = \biggl( \prod_{x\in\caV} \int_{\bbS^{N-1}} \dd\varphi_x \biggr) \e{-H_\caG^\bsJ(\bsvarphi)}.
\ee
Here, $\dd\varphi_x$ denotes the uniform probability measure on $\bbS^{N-1}$, that is, $\int_{\bbS^{N-1}} \dd\varphi_x = 1$. The relevant Gibbs state can be defined as the linear functional $\langle \cdot \rangle_\caG^\bsJ$ on functions $(\bbS^{N-1})^\caV \to \bbR$, that assigns the value
\be
\langle f \rangle_\caG^\bsJ = \frac1{Z_\caG^{\rm spin}(\bsJ)} \biggl( \prod_{x\in\caV} \int_{\bbS^{N-1}} \dd\varphi_x \biggr) f \bigl( (\varphi_x)_{x\in\caV} \bigr) \e{-H_\caG^\bsJ(\bsvarphi)}.
\ee

We introduce a special class of spin correlation functions that have special relevance to loop models. Let $k\in\bbN$ and $x_1, \dots, x_{2k} \in \caV$ be distinct sites. We assume that $N\geq2$ and we write $\varphi_x^{(i)}$ for the $i$th component of the vector $\varphi_x \in \bbR^N$. The corresponding correlation function is
\be
\label{def main corr}
\langle \varphi_{x_1}^{(1)} \varphi_{x_1}^{(2)} \dots \varphi_{x_{2k}}^{(1)} \varphi_{x_{2k}}^{(2)} \rangle_\caG^\bsJ = \frac{Z_\caG^{\rm spin}(\bsJ; x_1,\dots,x_{2k})}{Z_\caG^{\rm spin}(\bsJ)}
\ee
with
\be
Z_\caG^{\rm spin}(\bsJ; x_1,\dots,x_{2k}) = \biggl( \prod_{x\in\caV} \int_{\bbS^{N-1}} \dd\varphi_x \biggr)  \varphi_{x_1}^{(1)} \varphi_{x_1}^{(2)} \dots \varphi_{x_{2k}}^{(1)} \varphi_{x_{2k}}^{(2)} \e{-H_\caG^\bsJ(\bsvarphi)}.
\ee

Let us define the potential function $U$ of the random wire model by the equation
\be
\label{def U}
\e{-U(n)} = \frac{\Gamma(\frac N2)}{\Gamma(n+\frac N2)}.
\ee
We then have a relation between the $O(N)$ spin system and the random wire model with $\alpha=N$. Recall the event $E_X(\bsx,\bsq)$ defined in \eqref{def event 1}.

\begin{proposition}
\label{prop repr O(N)}
Let $U(n)$ defined by \eqref{def U}. Then
\begin{itemize}
\item[(a)] If $N \in \bbN$, we have
\[
Z_\caG^{\rm spin}(\bsJ) = Z_\caG(N,\bsJ).
\]
\item[(b)] If $N=2,3,4,\dots$, we have
\[
\langle \varphi_{x_1}^{(1)} \varphi_{x_1}^{(2)} \dots \varphi_{x_{2k}}^{(1)} \varphi_{x_{2k}}^{(2)} \rangle_\caG^\bsJ = \sum_{X \in \caX^{\rm even}_{2k}} \Bigl( \frac2N \Bigr)^{|X|} \frac1{2^{2k}} \sum_{\bsq \in \bbN^{2k}} \bbE_\caG^{N,\bsJ} \biggl[ 1_{E_X(\bsx,\bsq)} \prod_{j=1}^{2k} \frac1{n_{x_j}+\frac N2} \biggr].
\]
\end{itemize}
\end{proposition}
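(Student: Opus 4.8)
The plan is to expand both $Z_\caG^{\rm spin}(\bsJ)$ and $Z_\caG^{\rm spin}(\bsJ;x_1,\dots,x_{2k})$ into sums over \emph{coloured} link configurations and to recognise the result, after summing over colours, as the corresponding wire-model quantity. For part (a), write $\e{-H_\caG^\bsJ(\bsvarphi)} = \prod_{e=\{x,y\}} \sum_{m_e\geq0} \frac{(2J_e)^{m_e}}{m_e!} (\varphi_x\cdot\varphi_y)^{m_e}$ and expand each $(\varphi_x\cdot\varphi_y)^{m_e} = \bigl(\sum_{i=1}^N \varphi_x^{(i)}\varphi_y^{(i)}\bigr)^{m_e}$ by the multinomial theorem, written as a sum over colourings of the $m_e$ links on $e$ by $\{1,\dots,N\}$; each link thereby acquires a colour, the same at both endpoints. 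The Gibbs integral factorises over vertices, leaving at $x$ the moment $\int_{\bbS^{N-1}} \prod_{i=1}^N (\varphi_x^{(i)})^{a_{x,i}}\,\dd\varphi_x$, where $a_{x,i}$ is the number of colour-$i$ link endpoints at $x$. A standard computation shows this moment vanishes unless every $a_{x,i}$ is even --- which in particular forces $\bsm\in\caM_\caG$ --- and otherwise equals $\bigl(\prod_i(a_{x,i}-1)!!\bigr)\big/\bigl(N(N+2)\cdots(N+2n_x-2)\bigr)$, which by \eqref{def U} is $2^{-n_x}\e{-U(n_x)}\prod_i(a_{x,i}-1)!!$. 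The factor $2^{\sum_e m_e}$ from $\prod_e(2J_e)^{m_e}$ cancels $\prod_x 2^{-n_x}$ because $\sum_e m_e = \sum_x n_x$ by \eqref{def local occupancy}.

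It remains, for (a), to prove the combinatorial identity that for fixed $\bsm\in\caM_\caG$ the sum of $\prod_x\prod_i(a_{x,i}-1)!!$ over colourings with all $a_{x,i}$ even equals $\sum_{\bspi\in\caP_\caG(\bsm)} N^{\lambda(\bsm,\bspi)}$. Indeed $\prod_i(a_{x,i}-1)!!$ is the number of pairings of the link endpoints at $x$ that only pair equal colours, so the left side counts pairs (colouring, pairing) with the pairing monochromatic; summing in the other order, a pairing $\bspi$ forces the links of each loop to share a colour, leaving $N^{\lambda(\bsm,\bspi)}$ compatible colourings. Assembling the pieces gives $Z_\caG^{\rm spin}(\bsJ) = \sum_{\bsw\in\caW_\caG} N^{\lambda(\bsw)} \prod_e\frac{J_e^{m_e}}{m_e!}\e{-\sum_x U(n_x)} = Z_\caG(N,\bsJ)$.

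For part (b) (where $N\geq2$ is needed so that colours $1$ and $2$ are distinct), run the same expansion for $Z_\caG^{\rm spin}(\bsJ;\bsx)$, $\bsx=(x_1,\dots,x_{2k})$; the only new feature is that at each source vertex $x_j$ the integrand carries two extra factors $\varphi_{x_j}^{(1)}\varphi_{x_j}^{(2)}$, so the moment there becomes $\int_{\bbS^{N-1}} (\varphi^{(1)})^{a_{x_j,1}+1}(\varphi^{(2)})^{a_{x_j,2}+1}\prod_{i\geq3}(\varphi^{(i)})^{a_{x_j,i}}\,\dd\varphi$. This vanishes unless $a_{x_j,1},a_{x_j,2}$ are odd and $a_{x_j,i}$ even for $i\geq3$ (again $\bsm\in\caM_\caG$), and otherwise equals $2^{-(n_{x_j}+1)}\e{-U(n_{x_j}+1)}(a_{x_j,1})!!\,(a_{x_j,2})!!\prod_{i\geq3}(a_{x_j,i}-1)!!$ with $\e{-U(n_{x_j}+1)} = \frac1{n_{x_j}+\frac N2}\e{-U(n_{x_j})}$ by \eqref{def U}. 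Tracking the powers of $2$ as before (one extra $\tfrac12$ per source, hence a global $2^{-2k}$) one obtains
\[
Z_\caG^{\rm spin}(\bsJ;\bsx) = \frac1{2^{2k}} \sum_{\bsm\in\caM_\caG} \Bigl(\prod_e\frac{J_e^{m_e}}{m_e!}\Bigr)\Bigl(\prod_x \e{-U(n_x)}\Bigr)\Bigl(\prod_{j=1}^{2k}\frac1{n_{x_j}+\frac N2}\Bigr)\, S(\bsm),
\]
where $S(\bsm)$ is the sum over admissible colourings of $\prod_{x\notin\bsx}\prod_i(a_{x,i}-1)!! \cdot \prod_j (a_{x_j,1})!!\,(a_{x_j,2})!!\prod_{i\geq3}(a_{x_j,i}-1)!!$.

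The final step is to read $S(\bsm)$ combinatorially and match it to the right-hand side. It counts pairs $(c,\tilde\bspi)$, where $\tilde\bspi$ is a pairing, monochromatic for $c$, of all link endpoints together with, at each $x_j$, one extra colour-$1$ leg and one extra colour-$2$ leg. Forgetting the legs --- replacing, at $x_j$, the two leg-pairs by the single pair joining the two link endpoints they used --- produces an ordinary $\bspi\in\caP_\caG(\bsm)$ and a labelling $\bsq\in\bbN^{2k}$ of the marked pairs; letting $X$ be the set partition of $\{1,\dots,2k\}$ recorded by which marked pairs lie in the same loop, the event $E_X(\bsx,\bsq)$ of \eqref{def event 1} holds. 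Conversely, recovering $(c,\tilde\bspi)$ from $(\bspi,\bsq)$ amounts to: for each loop of $\bspi$ containing the $t$ marked pairs of a block of $X$, cut the loop at those pairs into a cyclic chain of $t$ arcs and choose which half-edge at each cut carries colour $1$; the consistency requirement (each arc monochromatic, hence adjacent arcs oppositely coloured) is a proper $2$-colouring of the $t$-cycle, which exists \emph{iff $t$ is even}, and then in exactly two ways --- this is precisely why only $X\in\caX_{2k}^{\rm even}$ contribute and accounts for a factor $2^{|X|}$ --- while each loop of $\bspi$ missing all marked pairs may be coloured with any of the $N$ colours, giving $N^{\lambda(\bsm,\bspi)-|X|}$. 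Hence $S(\bsm) = \sum_{X\in\caX_{2k}^{\rm even}} (2/N)^{|X|} \sum_{\bsq\in\bbN^{2k}} \sum_{\bspi\in\caP_\caG(\bsm)} N^{\lambda(\bsm,\bspi)} 1_{E_X(\bsx,\bsq)}$; substituting back, dividing by $Z_\caG^{\rm spin}(\bsJ)=Z_\caG(N,\bsJ)$ and recognising $\bbE_\caG^{N,\bsJ}$ yields the claim. The main obstacle is this bijection: one must verify it is well defined and genuinely invertible in all configurations (loops visiting a source several times, blocks of size $2$, several loops through one vertex), and check carefully both the parity of the block sizes and the multiplicity $2^{|X|}N^{\lambda-|X|}$.
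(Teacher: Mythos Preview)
Your proof is correct and follows essentially the same route as the paper's: Taylor-expand the Gibbs factor, colour the links, integrate via the spherical moment formula \eqref{eq int spins}, and read the resulting double factorials as counting monochromatic pairings. The only cosmetic difference is that the paper handles the sources by directly forming one bi-coloured pair at each $x_j$ (a colour-$1$ endpoint matched with a colour-$2$ endpoint) rather than introducing auxiliary legs and then merging; your proper-$2$-colouring-of-a-$t$-cycle argument for why only even $X$ contribute, and for the multiplicity $2^{|X|}N^{\lambda-|X|}$, is in fact more explicit than the paper's brief assertion that ``loops with changes $1\leftrightarrow2$'' admit exactly two colourings.
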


Recall that $\caX^{\rm even}_{2k}$ is the set of even set partitions of $\{1,\dots,2k\}$.
It is possible to consider other correlation functions, for instance $\langle \varphi_x^{(1)} \varphi_y^{(1)} \rangle_\caG^\bsJ$. They can be written as ratios of loop partition functions, with the numerator involving ``open" configurations of links where $2n_x$ and $2n_y$ are odd. See \cite{Aiz,FV} for the Ising random currents and \cite{BFS,LT} for the related loop model for $O(N)$ spin systems. But these correlations do not have a direct probability meaning and we ignore them in this article.

In the case $N=1$ we have $\e{-U(n)} = 2^n/(2n-1)!!$; the denominator is equal to the number of pairings of $2n$ elements.

\begin{proof}
Let $\bsx = (x_1,\dots,x_{2k})$.
We get an expansion for $Z_\caG^{\rm spin}(\bsJ; \bsx)$ that also applies to the case $k=0$, i.e.\ $\bsx = \emptyset$. Let $\caM_\caG(\bsx) \subset \bbN_0^\caE$ be the set of link configurations with odd numbers of links touching $x_1,\dots,x_{2k}$, and even numbers touching all other sites. We write
\be
\exp\biggl\{ \sum_{e = \{x,y\} \in \caE} 2J_e \; \varphi_x \cdot \varphi_y \biggr\} = \prod_{e = \{x,y\} \in \caE} \; \prod_{i=1}^N \e{2J_e \, \varphi_x^{(i)} \varphi_y^{(i)}},
\ee
and we expand the exponential in Taylor series. Recalling the definition \eqref{def local occupancy}, we find
\bm
\label{genesis}
Z_\caG^{\rm spin}(\bsJ; \bsx) = \sum_{\bsm^{(1)} \in \caM_\caG(\bsx)} \sum_{\bsm^{(2)} \in \caM_\caG(\bsx)} \sum_{\bsm^{(3)} \in \caM_\caG} \dots \sum_{\bsm^{(N)} \in \caM_\caG} \biggl( \prod_{e \in \caE} \frac{(2J_e)^{m_e}}{m_e^{(1)}! \dots m_e^{(N)}!} \biggr) \\
\biggl( \prod_{x\in\caV} \int_{\bbS^{N-1}} \dd\varphi_x \biggr) \biggl( \prod_{x \in \caV \setminus \bsx} \bigl( \varphi_x^{(1)} \bigr)^{2n_x^{(1)}} \dots \bigl( \varphi_x^{(N)} \bigr)^{2n_x^{(N)}} \biggr) \\
\prod_{j=1}^{2k} \bigl( \varphi_{x_j}^{(1)} \bigr)^{2n_{x_j}^{(1)}+1} \bigl( \varphi_{x_j}^{(2)} \bigr)^{2n_{x_j}^{(2)}+1} \bigl( \varphi_{x_j}^{(3)} \bigr)^{2n_{x_j}^{(3)}} \dots \bigl( \varphi_{x_j}^{(N)} \bigr)^{2n_{x_j}^{(N)}}.
\end{multline}
We set $m_e = \sum_{i=1}^N m_e^{(i)}$. We restricted the link configurations to the sets $\caM_\caG(\bsx)$ or $\caM_\caG$ since the angular integrals vanish otherwise by symmetry.
Recall that $\dd\varphi$ denote the normalised uniform measure on $\bbS^{N-1}$; we now use that
\be
\label{eq int spins}
\int_{\bbS^{N-1}} \bigl( \varphi^{(1)} \bigr)^{2n^{(1)}} \dots \bigl( \varphi^{(N)} \bigr)^{2n^{(N)}} \dd\varphi = \frac{\Gamma(\frac N2)}{2^n \Gamma(n+\frac N2)} \prod_{i=1}^N (2n^{(i)}-1)!!,
\ee
where $n = n^{(1)} + \dots + n^{(N)}$ and with the convention that $(-1)!! = 1$.
We rewrite the expansion by first summing over $\bsm \in \caM_\caG$. We then sum over $(m_e^{(1)})$, \dots, $(m_e^{(N)})$ such that $m_e^{(1)} + \dots + m_e^{(N)} = m_e$ for all $e \in \caE$. We get
\bm
Z_\caG^{\rm spin}(\bsJ; \bsx) = 2^{-2k} \sum_{\bsm \in \caM_\caG} \biggl( \prod_{e\in\caE} \frac{J_e^{m_e}}{m_e!} \biggr) \sumthree{\bsm^{(1)}, \bsm^{(2)} \in \caM_\caG(\bsx)}{\bsm^{(3)}, \dots, \bsm^{(N)} \in \caM_\caG}{\bsm^{(1)} + \dots + \bsm^{(N)} = \bsm} \prod_{e \in \caE} \frac{m_e!}{m_e^{(1)}! \dots m_e^{(N)}!} \\
\prod_{x \in \caV \setminus \bsx} \biggl( \frac{\Gamma(\frac N2)}{\Gamma(n_x + \frac N2)} \prod_{i=1}^N (2n_x^{(i)}-1)!! \biggr) %\\
\prod_{j=1}^{2k} \biggl( \frac{\Gamma(\frac N2)}{\Gamma(n_{x_j}+1+\frac N2)} (2n_{x_j}^{(1)})!! \; (2n_{x_j}^{(2)})!! \prod_{i=3}^N (2n_{x_j}^{(i)}-1)!! \biggr).
\end{multline}

We now replace the sums over $(m_e^{(i)})$ by a sum over $N$ possible ``colours" for each link, subject to the constraint that each site is intersected by an even number of links of each colour --- except for the sites $x_1,\dots,x_{2k}$, which are intersected by an odd number of 1-links and 2-links, and an even number of links of other colours. Further, we replace $(2n_x^{(i)}-1)!!$ by a sum over pairings of the $i$-links that intersect the site $x$. As for the sites $x_1,\dots,x_{2k}$, we sum over pairings such that a 1-link is paired with a 2-link, and all other pairs are between links of same colour. The number of choices for 1-links is $2n_{x_j}^{(1)}$ times $(2 n_{x_j}^{(1)}-2)!!$, the number of pairings of the remaining $2n_{x_j}^{(1)}-1$ points. The number of such pairings is then
\be
(2n_{x_j}^{(1)})!! \; (2n_{x_j}^{(2)})!! \prod_{i=3}^N (2n_{x_j}^{(i)}-1)!!.
\ee
Let $C(\bsw,\bsx)$ be the set of colour configurations that are compatible with the wire configuration $\bsw = (\bsm,\bspi)$ and the sites $\bsx$. We obtain
\be
\label{Hello Costanza!}
Z_\caG^{\rm spin}(\bsJ;\bsx) =  \sum_{\bsm \in \caM_\caG} \biggl( \prod_{e\in\caE} \frac{J_e^{m_e}}{m_e!} \biggr) \sum_{\bspi \in \caP_\caG(\bsm)} \biggl( \prod_{x\in\caV} \frac{\Gamma(\frac N2)}{\Gamma(n_x + \frac N2)} \biggr) \biggl( \prod_{j=1}^{2k} \frac1{2n_{x_j} + N} \biggr) |C(\bsw,\bsx)|.
\ee
The number of colours for a given $\bsm,\bspi,\bsx$ can be expressed in terms of loops. If $k=0$, i.e.\ without the complications due to $\bsx$, the constraint is that the links must have the same colour if they belong to the same loop. Then $|C(\bsw,\emptyset)| = N^{\lambda(\bsw)}$ and the claim (a) of the theorem is proved.

For $k\geq1$ the constraints from $\bsx$ are that there must be loops crossing these sites, whose colours change from 1 to 2 (or 2 to 1). We first sum over the pairs $q_1, \dots, q_{2k}$ where the changes occur. Then the wire configuration $\bsw$ must belong to a set $E_X(\bsx,\bsq)$ defined in \eqref{def event 1} for some even partition $X$ of $\{1,\dots,2k\}$. In that case there are $N$ possible colours for ordinary loops, and 2 colours for loops with changes $1 \leftrightarrow 2$. The number of colourings is then $N^{\lambda(\bsw)} (\frac2N)^{|X|}$ with $|X|$ the number of partition elements. Thus
\be
|C(\bsw,\bsx)| = \sum_\bsq \sum_{X \in \caX_{2k}^{\rm even}} N^{\lambda(\bsw)} \bigl( \tfrac2N \bigr)^{|X|} 1_{E_X(\bsx,\bsq)}(\bsw).
\ee
This gives the claim (b) of the theorem.
\end{proof}

We now consider the graph $\caG^{\rm b}$ with boundary. The hamiltonian is
\be
\label{def ham spin bc}
H_{\caG^{\rm b}}^{J,\bsone}(\bsvarphi) = -2J \sum_{\{x,y\} \in \caE} \varphi_x \cdot \varphi_y - \sqrt2 J \sum_{\{x,y\} \in \bar\caE} \varphi_x \cdot \bsone
\ee
where $\bsone$ is the $N$-component vector $(1,\dots,1)$. The partition function is
\be
Z_{\caG^{\rm b}}^{{\rm spin},\bsone}(J) = \biggl( \prod_{x \in \caV} \int_{\bbS^{N-1}} \dd\varphi_x \biggr) \e{-H_{\caG^{\rm b}}^{J,\bsone}(\bsvarphi)}.
\ee
We write $\langle \cdot \rangle_{\caG^{\rm b}}^{J,\bsone}$ for the Gibbs state with boundary condition $\bsone$.

\begin{proposition}
\label{prop spin open loops}
We have
\begin{itemize}
\item[(a)] $\displaystyle Z_{\caG^{\rm b}}^{{\rm spin},\bsone}(J) = Z_{\caG^{\rm b}}(N,J)$.
\item[(b)] If $N\geq2$, $\displaystyle \langle \varphi_x^{(1)} \varphi_x^{(2)} \rangle_{\caG^{\rm b}}^{J,\bsone} = \tfrac1N \, \bbE_{\caG^{\rm b}}^{N,J} \biggl[ \frac{\tilde n_x}{n_x+\frac N2} \biggr]$.
\end{itemize}
\end{proposition}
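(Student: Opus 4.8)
The plan is to repeat the high--temperature expansion used in the proof of Proposition~\ref{prop repr O(N)}, this time on the graph with boundary. Writing $\e{-H_{\caG^{\rm b}}^{J,\bsone}}=\prod_{\{x,y\}\in\caE}\prod_{i=1}^N\e{2J\varphi_x^{(i)}\varphi_y^{(i)}}\prod_{\{x,y\}\in\bar\caE}\prod_{i=1}^N\e{\sqrt2 J\varphi_x^{(i)}}$ (using that each component of $\bsone$ equals $1$), I expand every exponential in Taylor series, label the resulting links by colours $i\in\{1,\dots,N\}$, and set $m_e^{(i)}=$ number of $i$--coloured links on $e$ and $m_e=\sum_i m_e^{(i)}$. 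Integrating the spins with \eqref{eq int spins} forces, at each $x\in\caV$, an even number of $i$--coloured endpoints for every $i$ (there is no constraint at $\bar\caV$, where nothing is integrated) --- exactly the constraint defining $\caM_{\caG^{\rm b}}$. The one boundary--specific arithmetic fact is that the $2^{-n}$ prefactors of \eqref{eq int spins}, combined with $\sum_{x\in\caV}n_x(\bsm)=\sum_{e\in\caE}m_e+\tfrac12\sum_{e\in\bar\caE}m_e$, turn $(2J)^{m_e}$ on bulk edges and $(\sqrt2 J)^{m_e}$ on boundary edges uniformly into $J^{m_e}$; this is precisely why the boundary coupling is taken to be $\sqrt2 J$. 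Replacing each $(2n^{(i)}-1)!!$ in \eqref{eq int spins} by a sum over pairings of the $i$--coloured endpoints at a site promotes the colour data together with a wire pairing $\bspi$ to a colouring of the loops of $\bsw=(\bsm,\bspi)$, the weight becoming $\bigl(\prod_e\tfrac{J^{m_e}}{m_e!}\bigr)\bigl(\prod_{x\in\caV}\e{-U(n_x)}\bigr)$ with $U$ as in \eqref{def U}. For part (a) the admissible colourings are those constant on each loop, open or closed alike, so there are $N^{\lambda(\bsw)}$ of them and summation yields $Z_{\caG^{\rm b}}^{{\rm spin},\bsone}(J)=Z_{\caG^{\rm b}}(N,J)$.

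For part (b) (where $N\ge2$ is needed so that the components $1,2$ are distinct) I insert $\varphi_x^{(1)}\varphi_x^{(2)}$. Applying \eqref{eq int spins} at $x$ with one extra power of $\varphi_x^{(1)}$ and of $\varphi_x^{(2)}$ makes the numbers of $1$--coloured and of $2$--coloured endpoints at $x$ odd (all others even), so one $1$--endpoint must be paired with one $2$--endpoint at $x$ and all remaining endpoints monochromatically; I record which of the $n_x$ pairs of $\bspi$ at $x$ is this ``special'' pair, summing an index $q\in\{1,\dots,n_x\}$. The decisive point is that the loop through the special pair is necessarily \emph{open}: following it away from $x$ it stays $1$--coloured on one side and $2$--coloured on the other, and since $x$ is the only site of $\caV$ at which a colour occurs an odd number of times, within $\caV$ neither strand can terminate, so both end at $\bar\caV$. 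Consequently the colouring count vanishes unless the loop through the $q$th pair at $x$ is open, in which case there are $2$ choices for that loop (which strand is coloured $1$) and $N$ for each of the remaining $\lambda(\bsw)-1$ loops, i.e.\ $2N^{\lambda(\bsw)-1}$ in all; and $\sum_{q=1}^{n_x}1_{\{q\text{th pair at }x\text{ in an open loop}\}}=\tilde n_x(\bsw)$ by \eqref{eq open loops x}.

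It remains to collect prefactors. The extra power at $x$ raises the Gamma argument in \eqref{eq int spins} from $n_x$ to $n_x+1$ and leaves one factor $\tfrac12$, so the on--site contribution at $x$ is $\tfrac12\,\tfrac{\Gamma(N/2)}{\Gamma(n_x+1+N/2)}=\tfrac12\,\e{-U(n_x)}\,\tfrac1{n_x+N/2}$, the $\sqrt2$'s on $\bar\caE$ being absorbed as in part (a). Hence
\[
Z_{\caG^{\rm b}}^{{\rm spin},\bsone}(J)\,\langle\varphi_x^{(1)}\varphi_x^{(2)}\rangle_{\caG^{\rm b}}^{J,\bsone}=\sum_{\bsw\in\caW_{\caG^{\rm b}}}\Bigl(\prod_e\tfrac{J^{m_e}}{m_e!}\Bigr)\Bigl(\prod_{y\in\caV}\e{-U(n_y)}\Bigr)N^{\lambda(\bsw)}\,\frac1N\,\frac{\tilde n_x(\bsw)}{n_x(\bsw)+\frac N2},
\]
and dividing by $Z_{\caG^{\rm b}}^{{\rm spin},\bsone}(J)=Z_{\caG^{\rm b}}(N,J)$ gives the claim. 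I expect the only step that is not bookkeeping transplanted from Proposition~\ref{prop repr O(N)} to be the openness argument for the special loop; it plays here the role that the even--set--partition mechanism played there, the boundary absorbing the parity defect carried by the single source.
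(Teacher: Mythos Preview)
Your proof is correct and follows essentially the same approach as the paper's own: expand $\e{-H_{\caG^{\rm b}}^{J,\bsone}}$ in Taylor series, use the spherical integral \eqref{eq int spins}, reinterpret the $(2n^{(i)}-1)!!$ factors as pairings to obtain a sum over wire configurations with coloured loops, and count colourings. Your treatment of the $\sqrt2$ on boundary edges via $\sum_{x\in\caV}n_x=\sum_{e\in\caE}m_e+\tfrac12\sum_{e\in\bar\caE}m_e$ and your prefactor bookkeeping at the special site match the paper exactly; in fact, your explicit parity argument for why the loop through the special pair must be \emph{open} is a point the paper merely asserts in writing $|C(\bsw,x)|=2\tilde n_x N^{\lambda(\bsw)-1}$.
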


\begin{proof}
The claim (a) can be proved as Proposition \ref{prop repr O(N)} (a). The relation between $(m_e)$ and $(n_x)$ is
\be
\sum_{x\in\caV} n_x = \sum_{e \in \caE} m_e + \tfrac12 \sum_{e \in \bar\caE} m_e.
\ee
Thus the factor $2^{-\sum n_x}$ from \eqref{eq int spins} kills the factors $2$ and $\sqrt2$ in front of the coupling parameters. The number of colours for a wire configuration $\bsw \in \caW_{\caG^{\rm b}}$ is equal to $N^{\lambda(\bsw)}$, where $\lambda(\bsw)$ is the total number of closed and open loops.

The claim (b) is also similar to Proposition \ref{prop repr O(N)} (b). Let
\be
Z^{{\rm spin}, \bsone}_{\caG^{\rm b}}(J; \,x) =  \biggl( \prod_{y \in \caV} \int_{\bbS^{N-1}} \dd\varphi_y \biggr)\varphi^{(1)}_x \varphi^{(2)}_x \e{-H_{\caG^{\rm b}}^{J,\bsone}}.
\ee
Proceeding as before, we get the analogue of \eqref{Hello Costanza!}. With $\caM_{\caG^{\rm b}}(x)$ the set of link configurations with an odd number of links touching $x$ and an even number touching all other sites of $\caV$, we have
\be
Z^{{\rm spin}, \bsone}_{\caG^{\rm b}} (J;x) = \sum_{\bsm \in \caM_{\caG^{\rm b}}} \biggl(\prod_{e \in \caE \cup \bar\caE} \frac{J^{m_e}}{m_e!} \biggr) \sum_{\bspi \in \caP_{\caG^{\rm b}}(\bsm)} \biggl(\prod_{y\in\caV} \frac{\Gamma(\frac{N}{2})}{\Gamma(n_y + \frac{N}{2})} \biggr) \\
\frac{1}{2n_x + N} \, |C(\bsw,x)|.
\ee
With $\tilde n_x$ the number of pairs at $x$ that belong to open loops, the number of colours is
\be
|C(\bsw,x)| = 2 \tilde n_x N^{\lambda(\bsw)-1}.
\ee
We get Proposition \ref{prop spin open loops} (b).
\end{proof}

\section{Correlations of $O(2)$ spin systems}
\label{sec proofs}

We now calculate the correlation function \eqref{def main corr}. The idea is to use Pfister's theorem on the characterisation of translation-invariant Gibbs states for the $O(2)$ spin model \cite{Pfi2}. In this section the graph is $\caG_L^{\rm b}$, that is, a box in $\bbZ^d$ with boundary conditions.

It is convenient to introduce the angles $\bsphi = (\phi_x)_{x \in \Lambda} \in [0,2\pi)^{\Lambda_L}$ such that $\varphi_x = (\cos\phi_x, \sin\phi_x)$. In these variables, the hamiltonians \eqref{def ham spin} and \eqref{def ham spin bc} are
\be
\begin{split}
&H_{\caG_L}^J(\bsphi) = -2J \sum_{\{x,y\} \in \caE_L} \cos(\phi_x-\phi_y), \\
&H_{\caG_L^{\rm b}}^{J,\overline\bsphi}(\bsphi) = -2J \sum_{\{x,y\} \in \caE_L} \cos(\phi_x-\phi_y) - 2J \sumtwo{x \in \Lambda_L, y \in \partial\Lambda_L}{\|x-y\|=1} \cos(\phi_x - \overline\phi_y).
\end{split}
\ee
The boundary condition $\bsone$ with variables $\{\varphi_x\}$ corresponds to $\overline\bsphi = (\frac\pi4)_{x \in \partial\Lambda_L}$. The corresponding Gibbs state for free boundary conditions is the linear functional that assigns the value
\be
\langle f \rangle_{\Lambda_L}^J = \frac1{Z_{\caG_L}^{\rm spin}(J)} \Bigl( \prod_{x\in\Lambda_L} \frac1{2\pi} \int_0^{2\pi} \dd\phi_x \Bigr) f(\bsphi) \e{-H_{\caG_L}^J(\bsphi)}
\ee
to a function $f : [0,2\pi)^{\Lambda_L} \to \bbR$. For boundary conditions $\overline\bsphi$, the definition of $\langle\cdot\rangle_{\caG_L^{\rm b}}^{J,\overline\bsphi}$ is the same but with hamiltonian $H_{\caG_L^{\rm b}}^{J,\overline\bsphi}$.

We have $\varphi_x^{(1)} \varphi_x^{(2)} = \cos\phi_x \sin\phi_x = \frac12 \sin(2\phi_x)$. We rotate all spins by $-\frac\pi4$ so as to get the more traditional $\overline\bsphi \equiv 0$ boundary conditions. Since $\sin(2(\phi_x+\frac\pi4)) = \cos(2\phi_x)$, we obtain
\be
\label{phivarphi}
\langle \varphi_x^{(1)} \varphi_x^{(2)} \rangle_{\caG_L^{\rm b}}^{J,\bsone} = \tfrac12 \langle \cos(2\phi_x) \rangle_{\caG_L^{\rm b}}^{J,\overline\bsphi \equiv 0}.
\ee

We are going to use a major result of Pfister about the set of extremal states of the classical XY model \cite{Pfi2}. In order to state this result, let $\langle\cdot\rangle_{\bbZ^d}^J$ and $\langle\cdot\rangle_{\bbZ^d}^{J,0}$ denote the infinite-volume Gibbs states
\be
\langle\cdot\rangle_{\bbZ^d}^J = \lim_{L\to\infty} \langle \cdot \rangle_{\caG_L^{\rm b}}^J, \qquad
\langle\cdot\rangle_{\bbZ^d}^{J,0} = \lim_{L\to\infty} \langle \cdot \rangle_{\caG_L^{\rm b}}^{J,\overline\phi \equiv 0}.
\ee
Existence of the limits $L\to\infty$ follows from Ginibre's inequalities \cite{Gin} with standard arguments, see e.g.\ \cite{FV}. As a matter of fact the infinite-volume limits can be taken along any ``van Hove sequence" of increasing domains, which implies in particular that the limiting states are translation-invariant.
Then Pfister's theorem states that the limiting symmetric Gibbs state $\langle\cdot\rangle_{\bbZ^d}^J$ is equal to the following convex combination of extremal states:
\be
\label{symm state}
\langle\cdot\rangle_{\bbZ^d}^J = \tfrac1{2\pi} \int_0^{2\pi} \langle\cdot\rangle_{\bbZ^d}^{J,\overline\bsphi \equiv \psi} \dd\psi.
\ee
Notice that the state $\langle\cdot\rangle_{\bbZ^d}^{J,\psi}$ is obtained from $\langle\cdot\rangle_{\bbZ^d}^{J,0}$ by a global spin rotation of angle $-\psi \in [0,2\pi)$. The above decomposition holds for all $J$ such that the pressure $p(2,J)$ is differentiable. 

We can now prove Theorems \ref{thm long dens} and \ref{thm PD}.

\begin{proof}[Proof of Theorem \ref{thm long dens}] 
From its definition \eqref{def m tilde}, Proposition \ref{prop spin open loops} (b), and Eq.\ \eqref{phivarphi}, we have that
\be
\tilde m(d,J) = \langle \cos(2\phi_0) \rangle_{\bbZ^d}^{J,0}.
\ee

The monotonicity properties of Theorem \ref{thm long dens} (a) follow from standard arguments based on Ginibre's inequalities, see \cite{FV}.

For Theorem \ref{thm long dens} (b), we use Proposition \ref{prop basic prop} (b) --- more precisely, we use a straightforward extension to the case of open boundary conditions. We take $\bar\alpha = \sqrt2$ and $C=2$. The number of random walks of length $k$ and with fixed initial point is equal to $(2d)^k$. This immediately gives the result.

The absence of long loops when $d=1$ is an easy exercise, and when $d=2$ it follows from the works of Pfister \cite{Pfi1} and Ioffe, Shlosman, and Velenik \cite{ISV}; see \cite[Theorem 9.2]{FV} for a clear exposition. Their result is that the infinite-volume Gibbs state is invariant under spin rotations, so $\tilde m(2,J) = \langle \cos(2\phi_0) \rangle_{\bbZ^2}^{J,0} = 0$. This proves (c).

For (d), it can be shown that $\bigl\langle \cos\phi_0 \bigr \rangle_{\bbZ^d}^{J,0} > 0$ implies that $\bigl\langle \cos(2\phi_0) \bigr \rangle_{\bbZ^d}^{J,0} > 0$, see \cite[Corollary 3.6]{Pfi1}. We now use the fundamental result of Fr\"ohlich, Simon, Spencer about the occurrence of long-range order in $O(N)$. Let $\langle\cdot\rangle_{\bbZ^d}^{J,{\rm per}}$ denote the infinite-volume Gibbs state obtained as the limit $L\to\infty$ of the state $\langle\cdot\rangle_{\caG_L}^{J,{\rm per}}$ with even $L$ and periodic boundary conditions. The claim \cite[Theorem 3.1]{FSS} is that
\be
\label{FSS}
\lim_{\|x\|\to\infty} \bigl\langle \cos\phi_0 \cos\phi_x \bigr\rangle_{\bbZ^d}^{J,{\rm per}} = c(d,J),
\ee
with $c(d,J)>0$ for $d\geq3$ and $J$ large enough --- the theorem actually holds for all $N \in \bbN$, not only $N=2$. Since the state $\bigl\langle \cdot \bigr\rangle_{\caG_L}^{J,{\rm per}}$ is translation and rotation invariant, the infinite-volume limit is equal to the state in \eqref{symm state}. Then
\be
\begin{split}
\lim_{\|x\|\to\infty} \bigl\langle \cos\phi_0 \cos\phi_x \bigr\rangle_{\bbZ^d}^{J,{\rm per}} &= \lim_{\|x\|\to\infty} \frac1{2\pi} \int_0^{2\pi} \bigl\langle \cos\phi_0 \cos\phi_x \bigr\rangle_{\bbZ^d}^{J,\psi} \dd\psi \\
&= \frac1{2\pi} \int_0^{2\pi} \Bigl( \bigl\langle \cos\phi_0 \bigr\rangle_{\bbZ^d}^{J,\psi} \Bigr)^2 \dd\psi \\
&= \tfrac12 \bigl( \langle \cos\phi_0 \rangle_{\bbZ^d}^{J,0} \bigr)^2.
\end{split}
\ee
We used $\langle \cos\phi_0 \rangle_{\bbZ^d}^{J,\psi} = \langle \cos(\phi_0+\psi) \rangle_{\bbZ^d}^{J,0} = \cos\psi \; \langle \cos\phi_0 \rangle_{\bbZ^d}^{J,0}$ and we integrated the angular integral. It follows that $\langle \cos\phi_0 \rangle_{\bbZ^d}^{J,0} = \sqrt{2c(d,J)}$ is positive for $J$ large enough, and so is $\tilde m(d,J)$.

Notice that the extremal state decomposition \eqref{symm state} is only proved for almost all $J$; but using the claim (a) about monotonicity in $J$, we get the existence of $J_{\rm c}$ as stated in (d).
\end{proof}

\begin{proof}[Proof of Theorem \ref{thm PD}] 
By Proposition \ref{prop repr O(N)} (b), the left side of the equation of Theorem \ref{thm PD} is equal to the limits $L\to\infty$ then $n\to\infty$ of the correlation function $2^{2k} \bigl\langle \varphi_{x_1^{(n)}}^{(1)} \varphi_{x_1^{(n)}}^{(2)} \dots \varphi_{x_{2k}^{(n)}}^{(1)} \varphi_{x_{2k}^{(n)}}^{(2)} \bigr\rangle_{\caG_L}^J $.

We use Pfister's theorem \eqref{symm state} and the fact that extremal states are clustering; we get
\be
\begin{split}
2^{2k} \bigl\langle \varphi_{x_1^{(n)}}^{(1)} \varphi_{x_1^{(n)}}^{(2)} \dots \varphi_{x_{2k}^{(n)}}^{(1)} \varphi_{x_{2k}^{(n)}}^{(2)} \bigr\rangle_{\caG_L}^J &= \; \bigl\langle \sin(2\phi_{x_1^{(n)}}) \dots \sin(2\phi_{x_{2k}^{(n)}}) \bigr\rangle_{\caG_L}^J \\
& \substack{L\to\infty \\ \longrightarrow \\ \phantom{L}} \frac1{2\pi} \int_0^{2\pi} \bigl\langle \sin(2\phi_{x_1^{(n)}}) \dots \sin(2\phi_{x_{2k}^{(n)}}) \bigr\rangle_{\bbZ^d}^{J,\psi} \dd\psi\\
& \substack{n\to\infty \\ \longrightarrow \\ \phantom{L}} \frac1{2\pi} \int_0^{2\pi} \Bigl( \bigl\langle \sin(2\phi_0) \bigr\rangle_{\bbZ^d}^{J,\psi} \Bigr)^{2k} \dd\psi.
\end{split}
\ee
The expectation in the rotated Gibbs state can be expressed in term of $\tilde m(d,J)$, namely,
\be
\begin{split}
\bigl\langle \sin(2\phi_0) \bigr\rangle_{\bbZ^d}^{J,\psi} &= \bigl\langle \sin(2\phi_0 + 2\psi) \bigr\rangle_{\bbZ^d}^{J,0} \\
&= \cos(2\psi) \bigl\langle \sin(2\phi_0) \bigr\rangle_{\bbZ^d}^{J,0} + \sin(2\psi) \bigl\langle \cos(2\phi_0) \bigr\rangle_{\bbZ^d}^{J,0}.
\end{split}
\ee
We have $\bigl\langle \sin(2\phi_0) \bigr\rangle_{\bbZ^d}^{J,0} = 0$ by symmetry $\phi_x \mapsto -\phi_x$. We recognise $\tilde m(d,J)$ in the last term. We obtain
\be
\begin{split}
\lim_{n\to\infty} \lim_{L\to\infty} 2^{2k} \bigl\langle\varphi_{x_1^{(n)}}^{(1)} \varphi_{x_1^{(n)}}^{(2)} \dots \varphi_{x_{2k}^{(n)}}^{(1)} \varphi_{x_{2k}^{(n)}}^{(2)} \bigr\rangle_{\caG_L}^J &= \tilde m(d,J)^{2k} \frac1{2\pi} \int_0^{2\pi} \sin^{2k}(2\psi) \, \dd\psi \\
&= \tilde m(d,J)^{2k} \frac{(2k-1)!!}{2^k k!}.
\end{split}
\ee
This is precisely the formula \eqref{M1 even} for $M^{\rm even}_{\theta=1}(2k)$. This completes the proof of Theorem \ref{thm PD}.
\end{proof}

\bigskip
\noindent
{\bf Acknowledgments:} We are grateful to Jakob Bj\"ornberg, J\"urg Fr\"ohlich, Peter M\"orters, Charles-\'Edouard Pfister, Vedran Sohinger, Akinori Tanaka, and Yvan Velenik, for useful discussions. We thank the referee for valuable comments. CB is supported by the Leverhulme Trust Research Project Grant  RPG-2017-228.

\newpage

\renewcommand{\refname}{\small References}
\bibliographystyle{symposium}

\begin{thebibliography}{99}
\vspace{-2em}
\setlength{\columnsep}{2.5em}

\begin{multicols}{2}\scriptsize

\bibitem{Aiz}
M.~Aizenman,
\textit{Geometric analysis of $\varphi^4$ fields and Ising models},
Commun. Math. Phys. 86, 1--48 (1982)

\bibitem{BBBU}
A.~Barp, E.G.~Barp, F.-X.~Briol, D.~Ueltschi,
\textit{A numerical study of the 3D random interchange and random loop models},
J. Phys. A 48, 345002 (2015)

\bibitem{Ben}
C.~Benassi,
\textit{On classical and quantum lattice spin systems},
PhD thesis, University of Warwick (2018)

\bibitem{BFU}
C.~Benassi, J.~Fr\"ohlich, D.~Ueltschi,
{\em Decay of correlations in 2D quantum systems with continuous symmetry},
Ann. Henri Poincar\'e 18, 2831--2847 (2017)

\bibitem{BD}
N.~Berestycki, R.~Durrett,
\textit{A phase transition in the random transposition random walk},
Probab. Th. Rel. Fields 136, 203--233 (2006)

\bibitem{BK}
N.~Berestycki, G.~Kozma,
{\em Cycle structure of the interchange process and representation theory},
Bull. Soc. Math. France 143, 265--281 (2015)

\bibitem{Ber}
J.~Bertoin,
\textit{Random Fragmentation and Coagulation Processes},
Cambridge Studies Adv. Math. 102, Cambridge University Press (2006)

\bibitem{Betz}
V.~Betz,
\textit{Random permutations of a regular lattice},
J. Statist. Phys. 155, 1222--1248 (2014)

%\bibitem{BT}
%V.~Betz, L.~Taggi,
%\textit{Scaling limit of self-avoiding walk interacting with spatial random permutations},
%arXiv:1612.07234 (2016)

\bibitem{BST}
V.~Betz, H.~Sch\"afer, L.~Taggi,
\textit{Interacting self-avoiding polygons},
arXiv:1805.08517 (2018)

\bibitem{BU}
V.~Betz, D.~Ueltschi,
\textit{Spatial random permutations and Poisson-Dirichlet law of cycle lengths},
Electr. J. Probab. 16, 1173--1192 (2011)

\bibitem{Bjo1}
J.E.~Bj\"ornberg,
{\em Large cycles in random permutations related to the Heisenberg model},
Electr. Commun. Probab. 20, 1--11 (2015)

\bibitem{Bjo2}
J.E.~Bj\"ornberg,
{\em The free energy in a class of quantum spin systems and interchange processes},
J. Math. Phys. 57, 073303 (2016)

\bibitem{BKLM}
J.E.~Bj\"ornberg, M.~Kotowski, B.~Lees, P.~Mi\l o\'s,
{\em The interchange process with reversals on the complete graph},
arXiv:1812.03301 (2018)

\bibitem{BZ}
L.V.~Bogachev, D.~Zeindler,
\textit{Asymptotic statistics of cycles in surrogate-spatial permutations},
Commun. Math. Phys. 334, 39--116 (2015)

\bibitem{BFS}
D.~Brydges, J.~Fr\"ohlich, T.~Spencer,
\textit{The random walk representation of classical spin systems and correlation inequalities},
Commun. Math. Phys. 83, 123--150 (1982)

\bibitem{DMZZ}
P.~Diaconis, E.~Mayer-Wolf, O.~Zeitouni, M.P.W.~Zerner,
\textit{The Poisson-Dirichlet law is the unique invariant distribution for uniform split-merge transformations},
Ann. Probab. 32, 915--938 (2004)

\bibitem{DLS}
F.J.~ Dyson, E.H.~Lieb, B.~Simon,
\textit{Phase transitions in quantum spin systems with isotropic and nonisotropic interactions},
J. Statist. Phys. 18, 335--383 (1978)

\bibitem{EP}
D.~Elboim, R.~Peled,
\textit{Limit distributions for Euclidean random permutations},
to appear in Commun. Math. Phys. (2019); arXiv:1712.03809

\bibitem{FFS}
R.~Fern\'andez, J.~Fr\"ohlich, A.D.~Sokal,
\textit{Random Walks, Critical Phenomena, and Triviality in Quantum Field Theory},
Texts and Monographs in Physics, Springer (1992)

\bibitem{FV}
S.~Friedli, Y.~Velenik,
\textit{Statistical Mechanics of Lattice Systems: A Concrete Mathematical Introduction},
Cambridge University Press (2017)

%\bibitem{FP}
%J. Fr\"ohlich, C.-\'E. Pfister,
%\textit{Spin waves, vortices, and the structure of equilibrium states in the classical XY model},
%Commun. Math. Phys. 89, 303--327 (1983)

\bibitem{FSS}
J.~Fr\"ohlich, B.~Simon, T.~Spencer,
\textit{Infrared bounds, phase transitions and continuous symmetry breaking},
Commun. Math. Phys. 50, 79--95 (1976)

\bibitem{Gin}
J.~Ginibre,
\textit{General formulation of Griffiths' inequalities},
Commun. Math. Phys. 16, 310--328 (1970)

\bibitem{GUW}
C.~Goldschmidt, D.~Ueltschi, P.~Windridge,
\textit{Quantum Heisenberg models and their probabilistic representations},
in Entropy and the Quantum II, Contemp. Math. 552, 177--224 (2011); arXiv:1104.0983

\bibitem{GHS}
R.B.~Griffiths, C.A.~Hurst, S.~Sherman,
\textit{Concavity of the magnetization of an Ising ferromagnet in a positive external magnetic field},
J. Math. Phys. 11, 790--795 (1970)

\bibitem{GLU}
S.~Grosskinsky, A.A.~Lovisolo, D.~Ueltschi,
\textit{Lattice permutations and Poisson-Dirichlet distribution of cycle lengths},
J. Statist. Phys. 146, 1105--1121 (2012)

\bibitem{ISV}
D.~Ioffe, S.~Shlosman, Y.~Velenik,
\textit{2D models of statistical physics with continuous symmetry: the case of singular interactions},
Commun. Math. Phys. 226, 433--454 (2002)

\bibitem{Kin}
J.F.C.~Kingman,
\textit{Random discrete distributions},
J. Royal Statist. Soc. B 37, 1--22 (1975)

\bibitem{LT}
B.~Lees, L.~Taggi,
\textit{Site monotonicity and uniform positivity for interacting random walks and the spin $O(N)$ model with arbitrary $N$},
arXiv:1902.07252
 
\bibitem{NCSOS}
A.~Nahum, J.T.~Chalker, P.~Serna, M.~Ortu\~no, A.M.~Somoza,
\textit{Length distributions in loop soups},
Phys. Rev. Lett. 111, 100601 (2013)

\bibitem{PS}
R.~Peled, Y.~Spinka,
\textit{Lectures on the spin and loop $O(n)$ models},
arXiv:1708.00058 (2017)

\bibitem{Pfi1}
C.-\'E. Pfister,
\textit{On the symmetry of the Gibbs states in two-dimensional lattice systems},
Commun. Math. Phys. 79, 181--188 (1981)

\bibitem{Pfi2}
C.-\'E. Pfister,
\textit{Translation invariant equilibrium states of ferromagnetic abelian lattice systems},
Commun. Math. Phys. 86, 375--390 (1982)

%\bibitem{Pfi2}
%C.-\'E. Pfister,
%\textit{On the ergodic decomposition of Gibbs random fields for ferromagnetic abelian lattice models},
%Ann. New York Acad. Sci. 491, 170--180 (1987)

\bibitem{Pit}
J.~Pitman,
\textit{Poisson-Dirichlet and GEM invariant distributions for split-and-merge transformations of an interval partition},
Combin. Probab. Computing 11, 501--514 (2002)

\bibitem{Sch}
O.~Schramm,
\textit{Compositions of random transpositions},
Israel J. Math. 147, 221--243 (2005)

\bibitem{Tsi}
N.V.~Tsilevich,
\textit{Stationary random partitions of a natural series},
Teor. Veroyatnost. i Primenen. 44, 55--73 (1999)

\bibitem{Uel1}
D.~Ueltschi,
\textit{Random loop representations for quantum spin systems},
J. Math. Phys. 54, 083301, 1--40 (2013)

\bibitem{Uel2}
D.~Ueltschi,
\textit{Uniform behaviour of 3D loop soup models},
in 6th Warsaw School of Statistical Physics, pp 65--100 (2017); arXiv:1703.09503

\end{multicols}\end{thebibliography}

\end{document}